\definecolor{blu3}{rgb}{.1,.0,.4}
\newtheorem{theorem}{Theorem}
\newtheorem{lemma}[theorem]{Lemma}
\newtheorem{remark}[theorem]{Remark}
\newcommand{\RR}{\ensuremath{\mathbb R}}  % real numbers
\newcommand{\ZZ}{\ensuremath{\mathbb Z}}  % integer numbers
\newcommand{\EE}{\ensuremath{\mathbb E}}  % expectation
\newcommand{\E}{\ensuremath{\mathcal{E}}}  % caligraphic E
\newcommand{\F}{\ensuremath{\mathcal{F}}}  % caligraphic E
\newcommand{\U}{\ensuremath{\mathcal{U}}}  
\DeclareMathOperator{\Var}{Var}
\DeclareMathOperator{\conv}{conv}
\DeclareMathOperator{\area}{area}
\DeclareMathOperator{\per}{per}
\DeclareMathOperator{\dwidth}{dwidth}
\newcommand\eps{\varepsilon}
\newcommand{\piin}[1]{\ensuremath{\Pi_{in}(#1)}}
\newcommand{\piout}[1]{\ensuremath{\Pi_{out}(#1)}}
\newcommand{\scalar}[1]{\ensuremath{\langle #1 \rangle}} 
\newcommand{\OPT}{{\textsc{Opt}}}  % expectation
\def\DEF#1{\textbf{\emph{#1}}}
\begin{document}

\title{Peeling potatoes near-optimally in near-linear time\thanks{A preliminary version of this paper appeared in Proc. 30th Annual Symposium on Computational Geometry (SoCG 2014), pp. 224--231.}}

\author{Sergio Cabello\thanks{Department of Mathematics, IMFM, and
                Department of Mathematics, FMF, University of Ljubljana, Slovenia.
                %S.C. was 
                Supported by the Slovenian Research Agency, program P1-0297, projects J1-4106 and L7-5459, and by the ESF EuroGIGA project (project GReGAS) of the European Science Foundation.}
        \and Josef Cibulka\thanks{Department of Applied Mathematics and Institute for Theoretical Computer Science, Charles University, Faculty of Mathematics and Physics, Czech Republic. 
        %J.C. was 
        Supported by the project CE-ITI (GA\v CR P202/12/G061) of the Czech Science Foundation.}
        \and Jan Kyn\v{c}l\thanks{Department of Applied Mathematics and Institute for Theoretical Computer Science, Charles University, Faculty of Mathematics and Physics, Czech Republic; and Alfr\'ed R\'enyi Institute of Mathematics, Hungary. %J.K. was 
        Supported by the project CE-ITI (GA\v CR P202/12/G061) of the Czech Science Foundation and by ERC Advanced Research Grant no 267165 (DISCONV).}
        \and Maria Saumell\thanks{Institute of Computer Science,
                The Czech Academy of Sciences, Czech Republic. With 	  institutional support RVO:67985807.
                %M.S. was 
                Supported by project LO1506 of the Czech Ministry of Education, Youth and Sports, project CE-ITI (GA\v CR P202/12/G061) of the Czech Science Foundation, project NEXLIZ - CZ.1.07/2.3.00/30.0038, co-financed by the European Social Fund and the state budget of the Czech Republic, ESF EuroGIGA project ComPoSe as F.R.S.-FNRS - EUROGIGA NR 13604, and H2020-MSCA-RISE project 73499 - CONNECT.}
        \and Pavel Valtr\thanks{Department of Applied Mathematics and Institute for Theoretical Computer Science, Charles University, Faculty of Mathematics and Physics, Czech Republic. 
        %P.V. was 
        Supported by the project CE-ITI (GA\v CR P202/12/G061) of the Czech Science Foundation.}
}

\date{\today}

\maketitle

\begin{abstract}
	We consider the following geometric optimization problem:
    find a convex polygon of maximum area contained 
    in a given simple polygon $P$ with $n$ vertices.
    We give a randomized near-linear-time $(1-\eps)$-approximation 
    algorithm for this problem:
    in $O(n( \log^2 n + (1/\eps^3) \log n + 1/\eps^4))$ time we find
    a convex polygon contained in $P$ that, with probability at least
    $2/3$, has area at least $(1-\eps)$ times 
    the area of an optimal solution. We also obtain similar results for
    the variant of computing a convex polygon inside $P$ with maximum
    perimeter.
    
    To achieve these results we provide new results 
    in geometric probability. The first result is a bound relating 
    the probability that two points chosen uniformly at random inside $P$ 
    are mutually visible and the area of the largest convex body inside $P$.
	The second result is a bound on the expected value of the difference
    between the perimeter of any planar convex body $K$ and the perimeter
    of the convex hull of a uniform random sample inside $K$.
    
    \medskip
    \textbf{Keywords:} geometric optimization; potato peeling; 
    visibility graph; geometric probability; approximation algorithm.
\end{abstract}

%%%%%%%%%%%%%%%%%%%%%%%%%%%%%%%%%%%%%%%%%%%%%%%%%%%%%%%%%%%%%%%%%%%%%%%%%%%%%%%%%%%%%%%%%%%%%%%%%%%%%%%%%%%%%%%%%%%%%%
\section{Introduction}

We consider the algorithmic problem of finding a maximum-area convex set
in a given simple polygon. Thus, we are interested in computing
\begin{align*}
    A^*(P) ~\mathrel{\mathop:}=~ \sup \{ \area(K) \mid K\subset P,~ K \text{ convex}\}.
\end{align*}
The problem was introduced by Goodman~\cite{Goodman-81}, who named it
the \DEF{potato peeling problem}. Goodman also
showed that the supremum is actually achieved, 
so we can replace it by the maximum.
Henceforth we use $n$ to denote the number of vertices
in the input polygon $P$.

Chang and Yap~\cite{potato-exact} showed that $A^*(P)$ can be computed
in $O(n^7)$ time. Since there have been no improvements 
in the running time of exact algorithms,
it is natural to turn the attention to faster, approximation algorithms.
A step in this direction is made by Hall-Holt et al.~\cite{hkms-06}, who show
how to obtain a constant-factor approximation in $O(n \log n)$ time.

In this paper we present a randomized $(1-\eps)$-approximation algorithm.
Besides the simple polygon $P$, the algorithm takes as input 
a parameter $\eps\in(0,1)$ controlling the approximation.
%and a parameter $\delta\in (0,1)$ controlling the probability of failure.
In time $O\left(n(\log^2 n + (1/\eps^3) \log n + 1/\eps^4)\right)$ the algorithm returns
a convex polygon contained in $P$ that, with probability at least $2/3$, 
has area at least $(1-\eps)\cdot A^*(P)$. 
For any constant $\eps$, and more generally for any
$\eps=\Omega\left(1/\log^{1/3}n\right)$, the running time becomes
$O(n\log^2 n)$. As usual, the probability of error can be reduced to $\delta \in (0,1)$
using $O(\log(1/\delta))$ independent repetitions of the algorithm.
Note that for $\eps<1/n^{3/2}$, the exact algorithm 
of Chang and Yap~\cite{potato-exact} is faster as it runs in time $O(n^7)=O(n/\eps^4)$. 

\paragraph{Overview of the approach.}
Let $R$ be a set of points contained in $P$.
The \DEF{visibility graph} of $R$, denoted by $G(P,R)$, 
has $R$ as vertex set and, for any two points $x$ and $y$ in $R$, 
the edge $xy$ is in $G(P,R)$ whenever the segment
$xy$ is contained in $P$. See Figure~\ref{fig:visibilitygraph}.

Let us assume that the set of points $R$ is obtained by
uniform sampling in $P$. We note the following properties:
\begin{itemize}
    \item For each convex polygon $K\subseteq P$,
        the area of the convex hull $\conv(K\cap R)$ is similar to the area of $K$,
        provided that $|K\cap R|$ is large enough. For this,
        it is convenient to have large $|R|$.
    \item For each convex polygon $K\subseteq P$,
        the boundary of $\conv(K\cap R)$ is made of edges in $G(P,R)$.
    \item With dynamic programming one can find a maximum-area
    	convex polygon defined by edges of $G(P,R)$. 
        For this to be efficient, 
        it is convenient that $G(P,R)$ has few edges.
\end{itemize}
Thus, we have a trade-off on the number of points in $R$ that are needed.
We argue that there is a suitable size for $R$ such that $G(P,R)$
has a near-linear expected number of edges and, with reasonable probability,
the edges of $G(P,R)$ give a good inner approximation to an optimal
solution. Instead of finding the optimal solution directly in $G(P,R)$,
we make a search in a small parallelogram of area $\Theta(A^*(P))$ 
around each edge of $G(P,R)$, performing a second sampling. 
The core of the argument is a bound relating $A^*(P)$ and
the probability that two random points in $P$ are visible.
Such relation was unknown and we believe that it is of independent interest.
See Theorems~\ref{thm:prob1},~\ref{thm:prob2} 
and the follow up work ~\cite{BJVW15} (summarized in Theorem~\ref{thm:BJVW15} here) 
for the precise relations.

\begin{figure}
    \centering
    \includegraphics{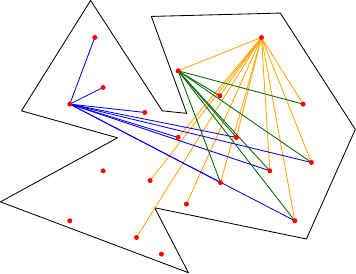}
    \caption{A portion of the visibility graph of a point set. Only the edges incident to
        three vertices are displayed.}
    \label{fig:visibilitygraph}
\end{figure}

\paragraph{Perimeter.} We are also interested in finding a convex polygon inside $P$ 
with maximum perimeter.
Let $\per(K)$ denote the perimeter of a convex body $K$. In the case that
$K$ is a segment, then $\per(K)$ is twice the length of $K$.
Let
\begin{align*}
    L^*(P) ~\mathrel{\mathop:}=~ \sup \{ \per(K) \mid K\subset P,~ K \text{ convex}\}.
\end{align*}
By the same compactness argument as used by Goodman~\cite[Proposition 1]{Goodman-81}, using the Blaschke selection theorem, the supremum is achieved and so it can be replaced by the maximum.

We provide a randomized algorithm to compute a convex polygon (or segment) inside $P$
whose perimeter is at least $(1-\eps)\cdot L^*(P)$. 
%The time bound is similar to
%the case of the area, but with a slightly worse dependency on $\eps$. 
For every $\delta > 0$, to succeed with probability $1-\delta$, the algorithm uses time 
\[
O\left( n\left[ (1/\eps^4) \log^2 n + \left((1/\eps) \log^2 n + (1/\eps^6) \log n + 1/\eps^8\right) \log(1/\delta) \right]  \right).
\]
The main obstacle in this case is that the polygons with near-optimal 
perimeter may be very skinny and thus have arbitrarily small area.
For that case, random sampling of points is futile, 
but we can use a longest segment contained in $P$ to approximate $L^*(P)$. 
More precisely, if the perimeter-optimal convex polygon has aspect ratio $O(\eps)$,
then we can $(1-\eps)$-approximate it via a longest segment inside $P$, which
in turn can be $(1-\eps)$-approximated in near-linear time~\cite{hkms-06}.
If the perimeter-optimal polygon has aspect ratio $\Omega(\eps)$, then
it has area at least $\Omega(\eps\cdot A^*(P))$, and the approach based on
random samples of points can be adapted, with a larger number of sample points.
To bound the number of sample points we use a new theorem 
in geometric probability bounding the expected difference between
the perimeter of any planar convex body $K$ and the perimeter of the convex
hull of a random sample inside $K$. See our Theorem~\ref{thm:randomperimeter} for
the precise statement.

\paragraph{Other related work.}
There have been several results about finding maximum-area objects
of certain type inside a given simple polygon. DePano, Ke and 
O'Rourke~\cite{DKO-1987} consider squares and equilateral triangles,
Daniels, Milenkovic and Roth~\cite{DMR-1997} consider axis-parallel 
rectangles, 
Melissaratos and Souvaine~\cite{MS-1992} consider arbitrary triangles.
Subquadratic algorithms to find a longest segment contained 
in a simple polygon were first given by Chazelle and Sharir~\cite{CS-1990}
and improved by Agarwal, Sharir and Toledo~\cite{AS-1996,AST-1994}.
Hall-Holt et al.~\cite{hkms-06} present near-linear time algorithms for a 
$(1-\eps)$-approximation of the longest segment.

Aronov et al.~\cite{meshed-potato} consider a variation where the
search is restricted to convex polygons whose edges are edges
of a given triangulation (with inner points) of $P$. 
They show how to compute a maximum-area convex polygon 
for this model in $O(m^2)$ time, where $m$ is the number of edges
in the triangulation.

Dumitrescu, Har-Peled and T{\'o}th~\cite{Dumitrescu-et-al} consider 
the following problem: given a unit square $Q$ and a set $X$ 
of points inside $Q$, find a maximum-area convex body inside $Q$ 
that does not have any point of $X$ in its interior. This is
an instance of the potato peeling problem for polygons \emph{with holes}.
They provide a $(1-\eps)$-approximation
in time $O(n^2/\eps^6)$. For any fixed $\eps$, 
the running time is quadratic. Our algorithm exploits the
absence of holes in $P$, so it does not produce an improvement
in this case.

The potato peeling problem can be understood as finding a largest 
set of points that are mutually visible. 
Rote~\cite{degree-convexity} showed how to compute in polynomial time
the probability that two random points inside a polygon are visible. A faster algorithm has been proposed by Buchin et al.~\cite{Buchin-visib}.
Cheong, Efrat and Har-Peled~\cite{shop} consider the problem of finding
a point in a simple polygon whose visibility region is maximized.
They provide a 
$(1-\eps)$-approximation algorithm using near-quadratic time. The approach
is based on taking a random sample of points in the polygon, 
constructing the visibility region of each point, 
and taking a point lying in most visibility regions.

\paragraph{Roadmap.} In Section~\ref{sec:convex} 
we provide tools related to convex bodies.
In Section~\ref{sec:probability} we relate the probability of 
two random points being visible and $A^*(P)$.
We present and analyze the algorithm to approximate $A^*(P)$ in 
Section~\ref{sec:algorithm}. 
In Section~\ref{sec:perimeter} we discuss the adaptation to maximize the perimeter.
We conclude in Section~\ref{sec:conclusions}.

\paragraph{Assumptions.} We will have to generate
points uniformly at random inside a triangle. For this, 
we will assume that a random number in the interval $[0,1]$ 
can be generated in constant time.

%%%%%%%%%%%%%%%%%%%%%%%%%%%%%%%%%%%%%%%%%%%%%%%%%%%%%%%%%%%%%%%%%%%%%%%%%%%%%%%%%%%%%%%%%%%%%%%555
\section{About convexity}
\label{sec:convex}

Here we provide tools related to convexity.
%In the first part, we provide results about the number of points that
%have to be sampled inside a convex body $K$ 
%so that the area of the convex hull of the sample 
%is a good approximation to the area of $K$.
%We also provide extensions to the case where we sample points
%in a superset of $K$. In the second part we describe a method that, with positive constant probability, finds a parallelogram containing $K$.
%In the third part we give an algorithm to find a largest convex
%polygon whose edges are defined by a visibility graph inside a polygon.

\subsection{Inner approximation using random sampling}

In this subsection, we provide results about the number of points that
have to be sampled inside a convex body $K$ 
so that the area of the convex hull of the sample 
is a good approximation to the area of $K$. We may think of $K$ as a maximum-area convex set in $P$ for which we aim to find a  
$(1-\eps)$-approximation. In our algorithm, we sample points
in a superset of $K$, thus
we also provide extensions to this case. In particular, Lemma~\ref{le:convex} deals with the problem of sampling points inside a given convex body $K$. In Lemma~\ref{le:sampleinside} the sample is taken from a larger polygon $\Gamma\supseteq K$ and the goal is to hit $K$ with at least $C$ points. These two results are then combined together in Lemma~\ref{le:sampleinside2}.

\begin{lemma}
\label{le:convex}
    Let $K$ be a convex body in the plane and let $R$ 
    be a sample of points chosen uniformly at random inside $K$. 
    There is some universal constant $C_1$ such that,
    if $|R|\ge C_1/ \eps^{3/2}$, 
    then with probability at least $5/6$
    it holds that $\area(\conv(R))\ge (1-\eps)\cdot \area(K)$.
\end{lemma}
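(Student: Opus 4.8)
The statement is invariant under invertible affine maps of the plane: such a map scales $\area(CH(R))$ and $\area(K)$ by the same factor and carries the uniform distribution on $K$ to the uniform distribution on its image. I would therefore begin by putting $K$ in a convenient normalized position and assume $\area(K)=1$. Write $n:=|R|$. The plan is to establish the bound on the \emph{expected missed area},
\[
  \EE\bigl[\,1-\area(CH(R))\,\bigr] ~\le~ c\, n^{-2/3},
\]
for some universal constant $c$, and then to finish with Markov's inequality: since $1-\area(CH(R))\ge 0$, the probability that $\area(CH(R))<(1-\eps)\area(K)=1-\eps$ is at most $\EE[1-\area(CH(R))]/\eps\le c\,n^{-2/3}/\eps$, which is $\le 1/6$ as soon as $n\ge(6c)^{3/2}\eps^{-3/2}$; this yields $C_1=(6c)^{3/2}$.

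For the expected missed area there are two essentially equivalent routes. The first is Efron's identity: the point $x_i$ among $n+1$ uniform points $x_1,\dots,x_{n+1}$ is a vertex of their convex hull iff $x_i\notin CH(\{x_j:j\neq i\})$, so taking expectations and using symmetry gives $\EE_{n+1}[f_0]=(n+1)\Pr[x_{n+1}\notin CH(R_n)]=(n+1)\EE_n[\,1-\area(CH(R))\,]$, where $f_0$ is the number of hull vertices; this reduces the lemma to the classical fact that $\EE[f_0]=O(m^{1/3})$ for $m$ uniform points in an arbitrary planar convex body (the extremal case being smooth bodies, as in the Rényi--Sulanke asymptotics for the disk; polygons give only $O(\log m)$). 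The second route is direct: by Fubini,
\[
  \EE\bigl[\,1-\area(CH(R))\,\bigr] ~=~ \int_K \Pr\bigl[x\notin CH(R)\bigr]\, dx ,
\]
and if $x\notin CH(R)$ then $R$ lies in a closed halfplane avoiding $x$, hence $R$ avoids some cap of $K$ cut off by a line through $x$; a fixed cap of area $a$ is avoided with probability $(1-a)^n\le e^{-na}$. Writing $v(x)$ for the minimum area of a cap of $K$ cut off by a line through $x$ and $K(t):=\{x\in K: v(x)\le t\}$ for the $t$-wet part, I would bound the integral over $K(1/n)$ trivially by $\area(K(1/n))$, bound the integral over each dyadic shell $K(2^{j+1}/n)\setminus K(2^j/n)$ by $O(e^{-2^j}\area(K(2^{j+1}/n)))$, and sum the resulting geometric series. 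Either way, everything reduces to the purely geometric estimate $\area(K(t))=O(t^{2/3})$ uniformly over convex bodies $K$ of area $1$ — the exponent $2/3$ being natural since in a disk a cap of area $t$ has height $\Theta(t^{2/3})$, so its $t$-wet part is an annulus of width $\Theta(t^{2/3})$ — whence $\EE[1-\area(CH(R))]=O(n^{-2/3})$ on taking $t=1/n$.

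The genuine obstacle, present in both routes, is the \emph{uniformity over all convex bodies}: controlling $\area(K(t))$ for an arbitrary $K$ (equivalently, $\EE[f_0]$ for an arbitrary $K$) with the correct exponent requires the economic cap covering and Macbeath-region machinery of Bárány and Larman; a naive union bound over an $\eps$-net of separating directions would lose a logarithmic (or polynomial) factor and spoil the $\eps^{-3/2}$ shape of the bound. I would therefore either invoke these classical estimates as a black box or, if a self-contained proof is desired, reprove the planar cap-covering lemma. Everything else — the affine normalization, the Fubini/Efron step, the single-cap exponential tail, the dyadic summation, and the concluding Markov step — is routine.
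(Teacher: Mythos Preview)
Your proposal is correct, and the high-level architecture --- bound the expected missed area by $O(n^{-2/3})$ uniformly in $K$, then apply Markov --- matches the paper exactly, down to the choice $C_1=(6c)^{3/2}$. The difference lies in how the uniform-in-$K$ bound on the expected missed area is obtained. You flag this uniformity as the ``genuine obstacle'' and propose the B\'ar\'any--Larman economic-cap-covering / Macbeath-region machinery (equivalently, the uniform $\EE[f_0]=O(m^{1/3})$ bound via Efron). The paper instead sidesteps the obstacle entirely by citing Groemer's extremal theorem: among all planar convex bodies of area~$1$, the expected missed area $\EE[X(m)]$ is \emph{maximized} when $K$ is a disk. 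This reduces the problem to a single body, for which the R\'enyi--Sulanke asymptotics give $\EE[X(m)]\le C' m^{-2/3}$ with an explicit $C'$. So the paper's route is shorter and lighter, avoiding the cap-covering apparatus altogether; your route is more hands-on and would work too, but is heavier than necessary for the result as stated.
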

\begin{proof}
    We use as a black box known extremal properties and
    bounds on the so-called missed area of a random polygon.
    See the lectures by B\'ar\'any~\cite[2nd lecture]{stochasticgeometry}, the survey~\cite{Ba08} or
    \cite{BL88} for an overview.

    Let us scale $K$ so that it has area $1$. 
    We have to show that $1-\area(\conv(R))\ge \eps$
    holds with probability at most $1/6$.

    Let $K_m$ denote the convex hull of $m$ points 
    chosen uniformly at random in $K$
    and define $X(m) = 1- \area(K_m)$. 
    Thus $X(m)$ is the \emph{missed area}, that is, 
    the area of $K\setminus K_m$.
    Groemer~\cite{Groemer} showed that $\EE [X(m)]$ 
    is maximized when $K$ is a disk of area $1$.
    R\'enyi and Sulanke~\cite{RS64_II} showed that 
    for every smooth convex set $K$ there exists
    some constant $C_K$, depending on $K$, such that
    $\EE [X(m)]\leq C_K \cdot m^{-2/3}$. This result also follows from a similar upper bound by R\'enyi and Sulanke~\cite{RS63} on the expected number $E_m$ of edges of $K_m$ and from Efron's~\cite{Efron65} identity $\EE [X(m)]= \EE [E_{m+1}]/(m+1)$.
    Both statements together imply that
    \[
	    \EE [X(m)] \le \frac{C'}{m^{2/3}},
    \]
    where $C'$ is the constant $C_K$ when $K$ is a unit-area disk.
    (From the results of~\cite{RS64_II}, or subsequent works,
    one can explicitly compute that $C' \le 5$, 
    so the constant is very reasonable.)

    We set $C_1 \mathrel{\mathop:}= (6C')^{3/2}$.
    Whenever $|R| \ge C_1 \cdot \eps^{-3/2}$, 
    we can use Markov's inequality to obtain
   
\belowdisplayskip=-8pt
    \begin{align*}
        \Pr [ 1-\area(\conv(R)) \ge \eps] ~& =~ \Pr [X(|R|) \ge \eps] \\
                                       & \le~  \frac{\EE [X(|R|)]}{\eps}\\
                                       & \le~  \frac{C'\, |R|^{-2/3}}{\eps}\\
                                       & \le~  \frac{C'\, \left( (6C')^{3/2} \cdot \eps^{-3/2}\right)^{-2/3}}{\eps}\\
                                       & =~ \frac{1}{6} \, .
    \end{align*}
\qedhere
\end{proof}

\begin{remark}\label{re:C_1}
For convenience we will assume that $C_1/\eps^{3/2}\ge 3$ 
for all $\eps\in (0,1)$. 
This is not problematic because we can replace $C_1$ with $\max(C_1,3)$,
if needed.
\end{remark}

\begin{lemma}\label{le:sampleinside}
	Let $K$ be a convex body contained in a polygon $\Gamma$,
    let $R$ be a random sample of points inside $\Gamma$,
    and let $C\ge 3$ be an arbitrary value.
    If 
    \[
    	|R|~\ge~ 4\cdot C \cdot \frac{\area(\Gamma)}{\area(K)}\,,
    \] 
    then with probability at least $5/6$ it holds that
    $|R\cap K|\ge C$.
\end{lemma}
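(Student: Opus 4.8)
The statement is about a random sample $R$ of points in $P$, and we want many of them to land in $K$. The natural approach is to model $|R\cap K|$ as a sum of independent indicator random variables and apply a concentration inequality. Let $N=|R|$ and let $p=\area(K)/\area(P)$ be the probability that a single uniformly random point of $P$ lands in $K$. Then $|R\cap K|$ is binomially distributed with parameters $N$ and $p$, so $\EE[|R\cap K|]=Np$. The hypothesis $N\ge 4C/p$ guarantees $\EE[|R\cap K|]\ge 4C$. We want to show that $|R\cap K|\ge C$ with probability at least $5/6$; equivalently, that it falls below $C$ — which is below one quarter of its mean — with probability at most $1/6$.

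**Key steps.** First I would set up the binomial model as above, noting that $\mu:=\EE[|R\cap K|]=Np\ge 4C\ge 12$. Next, I would apply the lower-tail Chernoff bound: for a sum $S$ of independent $\{0,1\}$ random variables with mean $\mu$ and any $t\in(0,1)$,
\[
    \Pr[S\le (1-t)\mu]~\le~ \exp\!\left(-\frac{t^2\mu}{2}\right).
\]
We want the event $|R\cap K|<C$. Since $C\le \mu/4$, this event is contained in $\{|R\cap K|\le \mu/4\}=\{|R\cap K|\le (1-3/4)\mu\}$, so we apply the bound with $t=3/4$, giving
\[
    \Pr[|R\cap K|<C]~\le~\exp\!\left(-\frac{9}{32}\mu\right)~\le~\exp\!\left(-\frac{9}{32}\cdot 12\right)~=~\exp\!\left(-\frac{27}{8}\right).
\]
Since $\exp(-27/8)<\exp(-3)<1/16<1/6$, this finishes the argument. (One could also just use the cruder bound $\Pr[S\le\mu/2]\le e^{-\mu/8}\le e^{-3/2}<1/6$ if one prefers not to optimize constants.)

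**Main obstacle.** There is essentially no deep obstacle here; the only thing to be careful about is the arithmetic linking the hypothesis $|R|\ge 4C\cdot\area(P)/\area(K)$ and the constraint $C\ge 3$ to a clean numerical conclusion — we need the mean to be large enough (at least $12$) that a constant-fraction lower-tail deviation has probability below $1/6$, and the factor $4$ together with $C\ge 3$ is exactly what makes this work. An alternative that avoids Chernoff entirely is to use the second moment / Chebyshev: $\Var(|R\cap K|)=Np(1-p)\le \mu$, so $\Pr[|R\cap K|\le \mu/4]\le \Pr[\,|\,|R\cap K|-\mu\,|\ge 3\mu/4\,]\le \mu/(3\mu/4)^2 = 16/(9\mu)\le 16/108<1/6$. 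I would likely present the Chebyshev version since it is self-contained and the constant still comes out comfortably below $1/6$; if the paper later needs a union bound over many bodies $K$ an exponential tail would be preferable, but for this lemma as stated either suffices.
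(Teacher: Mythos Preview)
Your proposal is correct, and the Chebyshev alternative you sketch at the end is precisely the paper's own proof: the paper also writes $X=|R\cap K|$ as a binomial with mean $\mu\ge 4C$, bounds $\Var[X]\le\mu$, and applies Chebyshev with deviation $\tfrac34\mu$ to get $\Pr[X\le C]\le \tfrac{16}{9\mu}\le\tfrac{16}{108}<\tfrac16$. Your Chernoff version is a valid (and sharper) substitute, but for this lemma the paper is content with the second-moment bound.
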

\begin{proof}
	Let $X=|R\cap K|$.
    The random variable $X$ is a sum of $|R|$ 
    independent Bernoulli random variables, each
    with expected value
    \[
    	p~=~ \frac{\area(K)}{\area(\Gamma)}\, .
    \]
    Standard calculations (or formulas) show that
    \[
    	\EE[X] ~=~ |R|\cdot p ~\ge~ 
        	4\cdot C \cdot 
            \frac{\area(\Gamma)}{\area(K)}\cdot \frac{\area(K)}{\area(\Gamma)}
            ~=~ 4\cdot C
    \]   
    and
	\[     
        \Var[X] ~=~ |R| \cdot p(1-p) ~\le~ \EE[X].
    \]
	We can now use Chebyshev's inequality in its form
    \[
    	\forall a>0: ~~~ \Pr[|X-\EE[X]| \ge a ] ~\le~ \frac{\Var[X]}{a^2} 
    \]
	and the inequality $C\ge 3$ to obtain the following:
\belowdisplayskip=-8pt
    \begin{align*}
    	\Pr[ X \le C] ~&\le~ 
        	\Pr\left[ X\le \tfrac 14 \EE[X] \right]\\
            &\le~ \Pr\left[ |X-\EE[X]|\ge \tfrac 34 \EE[X] \right] \\
            &\le~ \frac{4^2}{3^2} \cdot \frac{\Var[X]}{(\EE[X])^2}\\
            &\le~ \frac{16}{9} \cdot \frac{1}{\EE[X]}\\
            &\le~ \frac{16}{9} \cdot \frac{1}{4\cdot C}\\
            &\le~ \frac{16}{9} \cdot \frac{1}{4\cdot 3}\\
            & < ~ \frac{1}{6}\, .
    \end{align*}
\qedhere    
\end{proof}

\begin{lemma}\label{le:sampleinside2}
	Let $K$ be a convex body contained in a polygon $\Gamma$,
    let $R$ be a random sample of points inside $\Gamma$,
    and let $C_1$ be the constant in Lemma~\ref{le:convex}.
    If 
    \[
    	|R|~\ge~ 4\cdot \frac{C_1}{\eps^{3/2}} \cdot \frac{\area(\Gamma)}{\area(K)}\,,
    \] 
    then with probability at least $2/3$ it holds that
    $\area(\conv(R\cap K))\ge (1-\eps)\area(K)$.
\end{lemma}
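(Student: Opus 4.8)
The plan is to combine Lemma~\ref{le:sampleinside} and Lemma~\ref{le:convex} by a conditioning argument, taking a union bound over the two low-probability bad events. First I would observe that the hypothesis on $|R|$ here is exactly the hypothesis of Lemma~\ref{le:sampleinside} with the choice $C = C_1/\eps^{3/2}$; we are entitled to this choice because, as noted right after Lemma~\ref{le:convex}, $C_1/\eps^{3/2}\ge 3$ for every $\eps\in(0,1)$, so the constraint $C\ge 3$ is satisfied. Hence, with probability at least $5/6$, the sample satisfies $|R\cap K|\ge C_1/\eps^{3/2}$.

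Next I would condition on the event that $|R\cap K| = m$ for some fixed $m$. The key (and essentially the only nonroutine) point is that, conditioned on the number of sampled points that fall inside $K$, these points are distributed as an independent uniform sample of size $m$ inside $K$; this is the standard fact that a uniform Poisson/binomial process restricted to a subregion is again uniform on that subregion, and it lets us invoke Lemma~\ref{le:convex} on $R\cap K$. Therefore, for every fixed $m\ge C_1/\eps^{3/2}$, the conditional probability that $\area(CH(R\cap K))\ge (1-\eps)\area(K)$ is at least $5/6$. Summing this conditional bound against the distribution of $|R\cap K|$ restricted to the event $\{|R\cap K|\ge C_1/\eps^{3/2}\}$, the probability that both good events occur simultaneously is at least $1 - \tfrac16 - \tfrac16 = \tfrac23$, which is the claimed bound.

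The main obstacle to state cleanly is the conditioning step: one must be careful that Lemma~\ref{le:convex} is phrased for a sample of a \emph{fixed} size, whereas $|R\cap K|$ is itself random, so the argument needs the observation that conditioning on the size does not disturb uniformity, together with monotonicity in $m$ (a larger conditioned sample only helps, since Lemma~\ref{le:convex} holds for all $|R|\ge C_1/\eps^{3/2}$). With that in hand, the rest is a two-event union bound and no real computation is required.
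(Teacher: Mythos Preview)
Your proposal is correct and follows essentially the same route as the paper: define the two events $\E=\{|R\cap K|\ge C_1/\eps^{3/2}\}$ and $\F=\{\area(CH(R\cap K))\ge(1-\eps)\area(K)\}$, use Lemma~\ref{le:sampleinside} (with $C=C_1/\eps^{3/2}\ge 3$) for $\Pr[\overline{\E}]\le 1/6$, Lemma~\ref{le:convex} for $\Pr[\overline{\F}\mid\E]\le 1/6$, and combine to get $\Pr[\overline{\F}]\le 1/3$. You are in fact more explicit than the paper about why Lemma~\ref{le:convex} applies conditionally---namely, that conditioning on $|R\cap K|=m$ leaves the $m$ points i.i.d.\ uniform in $K$---which the paper tacitly assumes.
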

\begin{proof}
	We define the following events:
    \begin{align*}
       \E:&~~~ |R\cap K|~\ge~ C_1/\eps^{3/2},\\
       \F:&~~~ \area(\conv(R\cap K))~\ge~ (1-\eps)\cdot \area(K) .   
    \end{align*}
    For each event $\mathcal{A}$ we use $\overline{\mathcal{A}}$ 
    for its negation.
    Since $C_1/\eps^{3/2}\ge 3$ (see Remark~\ref{re:C_1}),
    Lemma~\ref{le:sampleinside} implies
    \[
    	\Pr\left[ \overline{\E} \right] ~\le~ \frac 16\, ,
    \]
    and Lemma~\ref{le:convex} implies
    \[
    	\Pr\left[ \overline{\F} \mid \E \right] ~\le~ \frac 16\, .
    \]
    Therefore
\belowdisplayskip=-8pt
    \begin{align*}
    	\Pr\left[ \overline{\F}\right] ~&=~ 
         \Pr\left[ \overline{\F} \mid \E \right]\cdot \Pr\left[\E\right]
         + \Pr\left[ \overline{\F} \mid \overline{\E} \right]\cdot 
         	\Pr\left[ \overline{\E}\right]\\
         &\le~ \Pr\left[ \overline{\F} \mid \E \right] + 
        		\Pr\left[ \overline{\E}\right]\\
         &\le~ \frac 16 +\frac 16\\
         &= \frac 13\, .
    \end{align*}
\qedhere
\end{proof}

\subsection{Outer containment in a parallelogram}
\label{sec:gamma}

In the previous subsection we have proved that, given a superset $\Gamma$ of $K$, for samples $R$ of points in $\Gamma$ of a certain size, $\area(\conv(R\cap K))$
is a good approximation to the area of~$K$ with positive constant probability. If we set $\Gamma:=P$, the size of $R$ might turn out too big to yield a subquadratic algorithm. For this reason, we want to find a smaller superset of $K$ to take the sample from.
In this subsection we show a method to find a parallelogram $\Gamma$ containing $K$ with area proportional to the area of $K$, and that this parallelogram can be found with positive constant probability, using a relatively small random sample from $P$. %This superset is a parallelogram of area ``comparable" to the area of $K$.

Let $K$ be a convex body in $\RR^2$. We use $y(p)$ to denote
the $y$-coordinate of a point $p$.
For each $\alpha \in (0,1)$ we define $y_\alpha(K)$ as the unique
value satisfying 
\[
	\area \left( \{ p\in K\mid y(p)\le y_\alpha(K) \} \right) ~=~ 
    	\alpha\cdot \area(K).
\]
Thus, the horizontal line at height $y_\alpha(K)$ breaks $K$ into
two parts and the lower one has a proportion $\alpha$ of the area
of $K$.
We further define
\[
    y_0(K)~\mathrel{\mathop:}=~ \min \{ y(p)\mid p\in K\} ~\text{ and }~~
    y_1(K)~\mathrel{\mathop:}=~ \max \{ y(p)\mid p\in K\}.
\]

\begin{figure}
    \centering
    \includegraphics{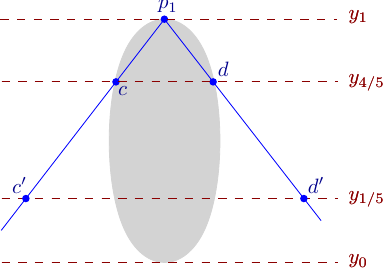}
    \caption{Proof of Lemma~\ref{le:levels}.}
    \label{fig:levels}
\end{figure}

\begin{lemma}
\label{le:levels}
	For each convex body $K$ in $\RR^2$
    \[
    	y_1(K)-y_{4/5}(K)\le y_{4/5}(K)-y_{1/5}(K) ~\text{ and }~ 
        y_{1/5}(K)-y_0(K)\le y_{4/5}(K)-y_{1/5}(K).
    \]
\end{lemma}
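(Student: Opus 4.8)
\emph{Proof plan.}
The two inequalities are equivalent under the reflection $y\mapsto-y$: replacing $K$ by its mirror image $K'$ in a horizontal line sends $y_\alpha(\cdot)$ to $-y_{1-\alpha}(\cdot)$, so $y_{1/5}(K)-y_0(K)=y_1(K')-y_{4/5}(K')$ and $y_{4/5}(K)-y_{1/5}(K)=y_{4/5}(K')-y_{1/5}(K')$; thus the first inequality applied to $K'$ is the second applied to $K$. It therefore suffices to prove $y_1(K)-y_{4/5}(K)\le y_{4/5}(K)-y_{1/5}(K)$; see Figure~\ref{fig:levels}.

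The plan is to reduce this to a statement about a single convex body $Q$, namely $y_{3/4}(Q)\ge\tfrac12\bigl(y_0(Q)+y_1(Q)\bigr)$; equivalently, the part of $Q$ above its midheight has area at least $\tfrac14\area(Q)$. Granting this, apply it to $Q:=K\cap\{p\mid y(p)\ge y_{1/5}(K)\}$, a convex body with $y_0(Q)=y_{1/5}(K)$, $y_1(Q)=y_1(K)$ and $\area(Q)=\tfrac45\area(K)$. The part of $Q$ below height $y_{4/5}(K)$ is the part of $K$ between heights $y_{1/5}(K)$ and $y_{4/5}(K)$, of area $\tfrac35\area(K)=\tfrac34\area(Q)$, so $y_{3/4}(Q)=y_{4/5}(K)$. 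The statement about $Q$ then becomes $y_{4/5}(K)\ge\tfrac12\bigl(y_{1/5}(K)+y_1(K)\bigr)$, which is the inequality we want.

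To prove the statement about $Q$, after a translation and scaling in the $y$-direction we may assume $y_0(Q)=0$ and $y_1(Q)=1$; let $w(y)$ be the length of the horizontal chord of $Q$ at height $y$. Since $Q$ is convex, $w$ is nonnegative and concave on $[0,1]$, with $\area(Q)=\int_0^1 w$ and the part of $Q$ above its midheight having area $\int_{1/2}^1 w$; so it is enough to show $\int_{1/2}^1 w\ge\tfrac14\int_0^1 w$ for every nonnegative concave $w$ on $[0,1]$. I would get this from two one-sided estimates anchored at the value $w(\tfrac12)$, each using $w(1)\ge0$. First, concavity on $[\tfrac12,1]$ gives $w(y)\ge(2-2y)\,w(\tfrac12)$ for $y\in[\tfrac12,1]$, hence $\int_{1/2}^1 w\ge\tfrac14\,w(\tfrac12)$. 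Second, a supporting line at $\tfrac12$ (which exists by concavity), with slope $m$, satisfies $0\le w(1)\le w(\tfrac12)+\tfrac12 m$, so $m\ge-2\,w(\tfrac12)$, and therefore $w(y)\le(2-2y)\,w(\tfrac12)$ for $y\in[0,\tfrac12]$, giving $\int_0^{1/2}w\le\tfrac34\,w(\tfrac12)$. Combining, $\int_0^1 w\le\tfrac34 w(\tfrac12)+\int_{1/2}^1 w\le 3\int_{1/2}^1 w+\int_{1/2}^1 w=4\int_{1/2}^1 w$. The bound is tight for $w(y)=1-y$, i.e.\ for $Q$ an upward-pointing triangle.

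Everything except that last inequality for concave functions is routine bookkeeping, so I expect it to be the only real obstacle. The key points there are that \emph{both} $\int_0^{1/2}w$ and $\int_{1/2}^1 w$ must be controlled through the single quantity $w(\tfrac12)$, and that $w(1)\ge0$ has to be invoked twice: once to bound the upper half from below, once to bound the slope at the midheight and hence the lower half from above. (We may assume $\area(K)>0$, as otherwise the $y_\alpha(K)$ are not even well defined; this also ensures $w$ is not identically zero.)
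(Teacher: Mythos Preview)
Your proof is correct, and it takes a genuinely different route from the paper's.

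The paper argues directly and geometrically: it takes the topmost point $p_1$ of $K$ and the chord $cd$ at height $y_{4/5}$, observes that the triangle $p_1cd$ lies inside $K$ (so has area at most $1/5$), and that the slab of $K$ between $y_{1/5}$ and $y_{4/5}$ lies inside the trapezoid cut out by the lines $p_1c$ and $p_1d$ between those heights (so that trapezoid has area at least $3/5$). A short computation with similar triangles then shows that $y_1-y_{4/5}>y_{4/5}-y_{1/5}$ would force the trapezoid to have area strictly less than $3/5$, a contradiction.

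Your approach instead isolates a clean auxiliary statement --- for any planar convex body $Q$, the portion above its midheight has at least a quarter of the area --- and proves it analytically via the concavity of the width function, using the two one-sided estimates anchored at $w(\tfrac12)$. You then apply this to $Q=K\cap\{y\ge y_{1/5}(K)\}$. This is slightly more abstract but yields a reusable inequality ($\int_{1/2}^1 w\ge\tfrac14\int_0^1 w$ for nonnegative concave $w$ on $[0,1]$, tight for a triangle), whereas the paper's argument is more visual and tied to the specific constants $1/5$ and $4/5$. Both arguments ultimately exploit the same convexity, just packaged differently: the paper compares an inscribed triangle with a circumscribed trapezoid, while you compare the width function to linear functions on each half of its domain.
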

\begin{proof}
	In this proof, let us drop the dependency on $K$ in the notation
    and set $y_\alpha \mathrel{\mathop:}= y_\alpha(K)$ for each $\alpha\in [0,1]$. 
    We only show that $y_1-y_{4/5}\le y_{4/5}-y_{1/5}$; the other
    inequality is symmetric.
    
    For $\alpha \in [0,1]$, let $\ell_{\alpha}$ be the horizontal line with $y$-coordinate $y_{\alpha}$. Let $p_1$ be a highest point of $K$,
    let $cd$ be the intersection of $\ell_{4/5}$ with $K$,
    let $\ell_c$ be the line through $p_1$ and $c$, 
    let $\ell_d$ be the line through $p_1$ and $d$,
    let $c'$ be the intersection of $\ell_c$ with $\ell_{1/5}$, and 
    let $d'$ be the intersection of $\ell_d$ with $\ell_{1/5}$.
    See Figure~\ref{fig:levels}.
   	
    By the convexity of $K$, the triangle $p_1cd$ is 
    contained in the portion of $K$ between $\ell_{1}$ and $\ell_{4/5}$, and the portion of $K$ between $\ell_{4/5}$ 
    and $\ell_{1/5}$ is contained in the trapezoid $cc'd'd$. 
    Thus
    \[
    	\area (cc'd'd) \ge 3 \area (p_1cd)
    \]
    and
    \begin{equation}
    \label{eq:area-fraction}
    	\area (p_1c'd') = \area(cc'd'd) + \area (p_1cd) \ge 4 \area (p_1cd).
    \end{equation}
    
    The triangle $p_1c'd'$ is similar to the triangle $p_1cd$ with scale
    factor $(y_{1}-y_{1/5})/(y_{1}-y_{4/5})$.
    By~\eqref{eq:area-fraction}, the scale factor is at least $2$, that is
    \[
    y_{1}-y_{1/5} \ge 2 (y_{1}-y_{4/5})
    \]
    and so
    \[
    y_{4/5}-y_{1/5} \ge y_{1}-y_{4/5}. \qedhere
    \]
\end{proof}

For any two points $a$ and $b$ and any value $A\ge 0$, let $a'\mathrel{\mathop:}=2a-b$, $b'\mathrel{\mathop:}=2b-a$, and
let $\Gamma(a,b,A)$ denote the parallelogram whose vertices are
the four horizontal translates of the points $a'$ and $b'$ by distance $\frac{2A}{|y(a)-y(b)|}$.
See Figure~\ref{fig:gamma}, left.
Note that $|a'b'|= 3|ab|$ and $\area(\Gamma(a,b,A))=12\cdot A$.

\begin{figure}
    \centering
    \includegraphics[width=\textwidth]{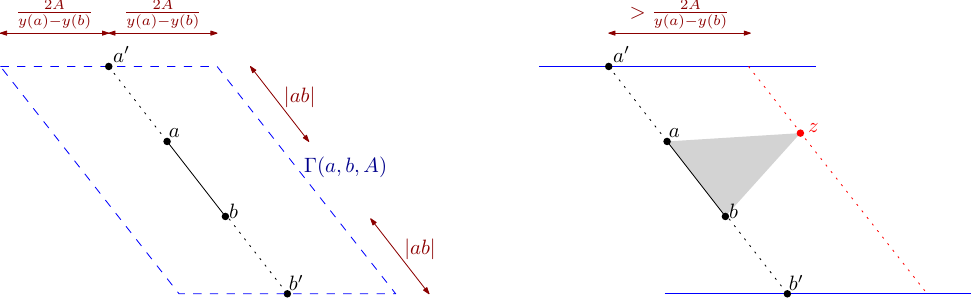}
    \caption{Left: parallelogram $\Gamma(a,b,A)$.
    		Right: proof of Lemma~\ref{le:gamma}.}
    \label{fig:gamma}
\end{figure}

\begin{lemma}
\label{le:gamma}
	Let $K$ be a convex body and assume that $A\ge \area(K)$.
	Let $a$ and $b$ be points in $K$ such that
    \[
    	y(a)\ge y_{4/5}(K) ~\text{ and }~ y(b)\le y_{1/5}(K).
    \]
    Then $K$ is contained in $\Gamma(a,b,A)$.
\end{lemma}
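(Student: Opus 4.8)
The plan is to split the two-dimensional containment into a vertical condition and a horizontal condition and treat each separately. First I would observe that $a'=2a-b$ and $b'=2b-a$ are affine combinations of $a$ and $b$, so the segment $a'b'$ lies on the line $\ell$ through $a$ and $b$; hence $\Gamma(a,b,A)$ is precisely the set of points $q$ whose $y$-coordinate lies between $y(b')$ and $y(a')$ and whose horizontal distance to $\ell$ is at most $d:=2A/|y(a)-y(b)|$. (We may assume $\area(K)>0$, so $y_{1/5}(K)<y_{4/5}(K)$ and therefore $y(a)>y(b)$; the degenerate case is trivial.) It then suffices to prove, for every $p\in K$: (i) $y(b')\le y(p)\le y(a')$, and (ii) the horizontal distance from $p$ to $\ell$ is at most $d$.

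For (i) I would compute $y(a')=2y(a)-y(b)\ge 2y_{4/5}(K)-y_{1/5}(K)=y_{4/5}(K)+\bigl(y_{4/5}(K)-y_{1/5}(K)\bigr)$, using $y(a)\ge y_{4/5}(K)$ and $y(b)\le y_{1/5}(K)$; by Lemma~\ref{le:levels} the parenthesized quantity is at least $y_1(K)-y_{4/5}(K)$, so $y(a')\ge y_1(K)\ge y(p)$. The bound $y(b')\le y_0(K)\le y(p)$ follows symmetrically from the other inequality of Lemma~\ref{le:levels}. This is exactly where the somewhat mysterious factor arrangement in the definition of $\Gamma$ is designed to pay off, so I expect making these inequalities line up correctly to be the one step that needs genuine care.

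For (ii), fix $p\in K$ and consider the triangle $T$ with vertices $a$, $b$, $p$. Since $K$ is convex and $a,b,p\in K$, we have $T\subseteq K$, so $\area(T)\le\area(K)\le A$. On the other hand, $\area(T)=\tfrac12\,|y(a)-y(b)|\cdot\rho$, where $\rho$ is the horizontal distance from $p$ to $\ell=ab$: this is immediate either from the cross-product formula for the area of a triangle, or by applying a horizontal shear that makes $\ell$ vertical (a horizontal shear preserves area, $y$-coordinates, and horizontal distances, and for a triangle with a vertical side the area is half the base times the horizontal extent). Combining the two facts yields $\rho\le 2A/|y(a)-y(b)|=d$, which is (ii). Together with (i) this places $p$ in $\Gamma(a,b,A)$, and since $p\in K$ was arbitrary we conclude $K\subseteq\Gamma(a,b,A)$.
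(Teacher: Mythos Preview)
Your proof is correct and follows essentially the same approach as the paper's: both arguments use Lemma~\ref{le:levels} to show $y(a')\ge y_1(K)$ and $y(b')\le y_0(K)$ for the vertical containment, and both bound the horizontal distance from any $p\in K$ to the line $ab$ via $\area(abp)\le\area(K)\le A$. The only cosmetic differences are that you argue (ii) directly while the paper phrases it as a contradiction, and you make the area identity $\area(abp)=\tfrac12|y(a)-y(b)|\cdot\rho$ explicit (and justify it cleanly via a horizontal shear), whereas the paper uses it without comment.
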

\begin{proof}
	In this proof, let us drop the dependency on $K$ in the notation
    and set $y_\alpha\mathrel{\mathop:}=y_\alpha(K)$ for each $\alpha\in [0,1]$.
    
    By Lemma~\ref{le:levels} we have
    \[
    	y(a') ~=~ 2y(a)-y(b) ~\ge~ y(a) + y_{4/5} - y_{1/5} ~\ge~ 
        	y(a) + y_{1} - y_{4/5} \ge y_1
    \]
    and similarly
    \[
    	y(b') ~=~ 2 y(b)-y(a) ~\le~ 
        	y(b) + y_{1/5} - y_{4/5} ~\le~ y(b)+ y_0 - y_{1/5} ~\le~ y_0.
    \]
    Therefore $K$ is contained between the horizontal lines
    $y=y(a')$ and $y=y(b')$. These are the lines supporting the top and
    bottom side of $\Gamma(a,b,A)$.
    
    Assume, for the sake of a contradiction, that $K$ has some point 
    $z$ outside $\Gamma(a,b,A)$. Since $z$ lies between the lines $y=y(b')$
    and $y=y(a')$, it must be that the horizontal distance from $z$ to $a'b'$ 
    is more than $\frac{2A}{y(a)-y(b)}$. 
    See Figure~\ref{fig:gamma}, right.
    Since the triangle $abz$ is
	contained in $K$ we would have 
    \[
    	\area(K) ~\ge~ \area(abz) ~>~ 
        	\frac 12 \cdot (y(a)-y(b)) \cdot \frac{2A}{y(a)-y(b)}
        ~=~ A ~\ge~ \area(K), 
    \]
    which is a contradiction.
    Therefore any point of $K$ is contained in $\Gamma(a,b,A)$.    
\end{proof}

\begin{lemma}\label{le:samplegamma}
	Let $K$ be a convex body contained in a polygon $P$,
    and assume that $A\ge \area(K)$.
    If $R$ is a random sample of points inside $P$ with
    \[
    	|R|~\ge~ 60\cdot \frac{\area(P)}{\area(K)}\,,
    \] 
    then with probability at least $2/3$ it holds that
    $R$ contains two points $a$ and $b$ such that $ab$ is an edge of $G(P,R)$ and
    $\Gamma(a,b,A)$ contains $K$.
\end{lemma}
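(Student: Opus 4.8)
The plan is to reduce to Lemma~\ref{le:gamma} by showing that, with the required probability, the sample $R$ hits both the top and the bottom ``cap'' of $K$. Concretely, define
\[
    K^+ ~=~ \{ p\in K \mid y(p)\ge y_{4/5}(K)\}
    \qquad\text{and}\qquad
    K^- ~=~ \{ p\in K \mid y(p)\le y_{1/5}(K)\}.
\]
By the definition of $y_{4/5}(K)$ and $y_{1/5}(K)$ we have $\area(K^+)=\area(K^-)=\tfrac15\area(K)$. Moreover, if $R$ contains a point $a\in K^+$ and a point $b\in K^-$, then $y(a)\ge y_{4/5}(K)$ and $y(b)\le y_{1/5}(K)$, so Lemma~\ref{le:gamma} (applicable since $A\ge\area(K)$) gives $K\subseteq \Gamma(a,b,A)$. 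Hence it suffices to bound the probability that $R$ fails to meet $K^+$ or fails to meet $K^-$.

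For this I would invoke Lemma~\ref{le:sampleinside} with the convex body $K^+$ (contained in $P$) and the value $C=3$. The hypothesis of that lemma asks for
\[
    |R| ~\ge~ 4\cdot 3\cdot \frac{\area(P)}{\area(K^+)}
    ~=~ 12\cdot \frac{\area(P)}{\tfrac15\area(K)}
    ~=~ 60\cdot \frac{\area(P)}{\area(K)},
\]
which is exactly the assumption of the present lemma. Therefore, with probability at least $5/6$, we get $|R\cap K^+|\ge 3$, and in particular $R\cap K^+\ne\emptyset$. The same argument applied to $K^-$ shows that $R\cap K^-\ne\emptyset$ with probability at least $5/6$.

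Finally I would combine the two events by a union bound: the probability that $R$ misses $K^+$ or misses $K^-$ is at most $\tfrac16+\tfrac16=\tfrac13$, so with probability at least $\tfrac23$ the sample contains a point $a\in K^+$ and a point $b\in K^-$, and then $\Gamma(a,b,A)\supseteq K$ as explained above. There is no real obstacle here; the only point worth noting is that the two cap events are not independent, which is why a union bound (rather than multiplying probabilities) is the right tool, and that the constant $60$ in the statement is precisely $4\cdot 3\cdot 5$, matching $C=3$ in Lemma~\ref{le:sampleinside} together with the factor $5$ lost in passing from $\area(K)$ to $\area(K^{\pm})$.
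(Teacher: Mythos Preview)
Your proposal is correct and follows essentially the same approach as the paper: define the two caps $K^\pm$ of area $\tfrac15\area(K)$, apply Lemma~\ref{le:sampleinside} with $C=3$ to each (yielding the same constant $60=4\cdot 3\cdot 5$), combine via a union bound to get probability at least $2/3$, and finish with Lemma~\ref{le:gamma}. The paper's proof is identical in structure and detail.
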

\begin{proof}
	Define 
    \[
    	K_{\le 1/5} \mathrel{\mathop:}= \{ p\in K\mid y(p)\le y_{1/5}(K)\} ~~\text{ and }~~
        K_{\ge 4/5} \mathrel{\mathop:}= \{ p\in K\mid y(p)\ge y_{4/5}(K)\},
    \]
	and consider the following events:
    \[
       \E_{\le 1/5}:~ K_{\le 1/5}\cap R \not= \emptyset~~\text{ and }~~
       \E_{\ge 4/5}:~ K_{\ge 4/5}\cap R \not= \emptyset .   
    \]
    Since 
    \[
    	|R|~\ge~ 4\cdot 3\cdot \frac{\area(P)}{\area(K)/5} ~=~
        4\cdot 3\cdot \frac{\area(P)}{\area(K_{\le 1/5})} ~=~
         4\cdot 3\cdot \frac{\area(P)}{\area(K_{\ge 4/5})} \,,
    \] 
    Lemma~\ref{le:sampleinside} implies
    \[
    	\Pr\left[ \E_{\le 1/5} \right] ~\ge~ \frac 56 ~~\text{ and }~~
        \Pr\left[ \E_{\ge 4/5} \right] ~\ge~ \frac 56\,.
    \]
    Applying the Fr\'echet inequality 
    \[
    	\Pr\left[ A \cap B \right]\ge \max \{0,\Pr\left[ A\right]+\Pr\left[ B\right]-1\}\,,
    \] 
    which does not require any independence assumption, we obtain that 
    \[
    	\Pr\left[ \E_{\le 1/5} \cap \E_{\ge 4/5} \right] ~\ge~ \frac 23\, .
    \]
    When $\E_{\le 1/5}$ and $\E_{\ge 4/5}$ hold, there are points
    $a\in K_{\le 1/5}\cap R$ and $b\in K_{\ge 4/5}\cap R$ and
    Lemma~\ref{le:gamma} implies that $K$ is contained in $\Gamma(a,b,A)$.       Moreover, $ab$ is an edge of $G(P,R)$ because $K$ is a convex body contained in $P$.
\end{proof}

\subsection{Largest convex polygon in a visibility graph.}
\label{sec:phi}

In this subsection we give an algorithm to find a largest convex
polygon whose edges are defined by a visibility graph inside a polygon. In our algorithm {\sc LargePotato}, described in Section~\ref{sec:algorithm}, the vertices of the visibility graph are points of a random sample in $P$, and the algorithm in the current subsection is used to find the largest convex polygon defined by that sample.

Let $H$ be a visibility graph in some simple polygon. We denote the set of vertices and edges of $H$ by $V(H)$ and $E(H)$, respectively.
We assume that the coordinates of the vertices of $H$ are known.
A set of vertices $U$ from $H$ is a \DEF{convex clique} if:
(i) there is an edge between any two vertices of $U$, and
(ii) the points of $U$ are in convex position.
The \DEF{area of a convex clique} $U$ is the area of $\conv(U)$.

Let $s$ be a point of $V(H)$. 
We are interested in finding a convex clique of maximum area in $H$, denoted by $ \varphi(H,s)$,
that has $s$ as highest point.
Thus we want
\[
    \varphi(H,s) ~\in~ \arg\max \{ \area(U) \mid
                \text{$U\subseteq V(H)$ a convex clique, 
                		$s$ highest point in $U$}\}.
\]

\begin{lemma}
\label{le:phi}
    For any point $s$ of $V(H)$, we can compute $\varphi(H,s)$
    in time $O(|V(H)|^2)$.
\end{lemma}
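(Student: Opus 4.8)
The goal is to compute a maximum-area convex clique of $H$ having a prescribed vertex $s$ as its highest point, in time $O(|V(H)|^2)$. The natural tool is dynamic programming over the vertices sorted by a suitable angular order around $s$, building up a convex polygonal chain edge by edge. Let me set $V = V(H)$ and $N = |V|$.

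First I would fix the point $s$ as the highest point and restrict attention to the vertices $v \in V$ with $y(v) < y(s)$ (ties broken by some rule so $s$ is uniquely highest); call this set $V_s$. Any convex clique $U$ with $s$ as highest vertex has a boundary $\partial CH(U)$ consisting of two $x$-monotone chains from $s$ to the lowest vertex of $U$: a "left" chain going clockwise and a "right" chain going counterclockwise, both turning consistently. Rather than tracking two chains simultaneously, I would use the standard trick for largest convex polygons: sort all vertices of $V_s$ by angle as seen from $s$ (or, equivalently, process them and reconstruct the polygon as a single closed convex chain starting and ending at $s$, traversed counterclockwise). The subproblem is: $D[u,v]$ = the maximum area of a convex polygonal chain that starts at $s$, ends with the directed edge $u \to v$, uses only edges of $H$, stays in convex position, and makes only left turns (counterclockwise), where the area is measured as the signed area swept from $s$. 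Then $\varphi(H,s)$'s area is obtained by, for each pair $(v, s)$, looking at $D[v,?]$ and closing the polygon back to $s$, checking that the final turn at $v$ and the turn at $s$ are also convex; equivalently one runs the DP over the full angular cycle around $s$.

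The recurrence is $D[u,v] = \max\{\, D[w,u] + (\text{signed area of triangle } s\,u\,v) \mid wu, uv \in E(H),\ \text{the turn } w \to u \to v \text{ is a left turn (or } w = s)\,\}$, and the base cases are $D[s,v]$ for each neighbor $v$ of $s$ in $H$, initialized to the area of the degenerate chain (zero, or the triangle $s$-$v$ appropriately). To make this run in $O(N^2)$ rather than $O(N^3)$, the key observation is the usual one for convex-polygon DP: when we process vertices in angular order around $s$ and, for a fixed "middle" vertex $u$, we want the best predecessor $w$, the constraint "$w \to u \to v$ is a left turn" combined with "$w \to u$ was itself a left-turning chain" means that as $v$ rotates, the set of admissible $w$'s changes monotonically, so with an appropriate ordering of the inner loop each transition is amortized $O(1)$ after an $O(N \log N)$ sort — or, more simply, one sorts the neighbors of each vertex by angle once (total $O(N^2 \log N)$, or $O(N^2)$ with a global angular sort reused), and for each ordered pair $(u,v)$ spends $O(1)$ by maintaining, for each $u$, a running maximum of $D[w,u]$ over the $w$'s that remain feasible as $v$ advances. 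Since there are $O(N^2)$ states $(u,v)$ and each is touched $O(1)$ times, the total is $O(N^2)$ after the sorting, and the sorting can be arranged to cost $O(N^2)$ as well (e.g. sort $V_s$ once around $s$, and sort around each vertex by a single $O(N)$-per-vertex merge using the global order — or simply absorb an $O(N^2\log N)$ factor, which the paper's later use can tolerate; but $O(N^2)$ is achievable with the standard rotational-sweep bookkeeping).

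The step I expect to be the main obstacle is making the $O(1)$-amortized transition rigorous: one must argue that the feasibility region for the predecessor $w$ (those $w$ with $wu \in E(H)$, with $w$ before $u$ in the angular order, and with $w \to u \to v$ left-turning) is "monotone" in $v$ so that a pointer advancing over the sorted neighbor list of $u$ never backtracks, and that the running maximum of $D[w,u]$ can be updated incrementally. One also has to be careful that (i) convex position of the whole vertex set is genuinely enforced by only checking consecutive turns — this is true because a closed chain that turns left at every vertex is convex — and (ii) the correspondence between convex cliques with highest point $s$ and closed left-turning $s$-to-$s$ chains in $H$ is a bijection onto their convex hulls, so the DP optimum equals $\area(\varphi(H,s))$. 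Reconstructing the actual clique (not just its area) is then a routine backpointer walk. Finally, since $H$ is a visibility graph its edge set may be dense, but $|E(H)| \le \binom{N}{2}$, so the $O(N^2)$ bound stands regardless of how many edges are present.
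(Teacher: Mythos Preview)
Your proposal is essentially the same approach as the paper's: both set up a dynamic program indexed by the last directed edge of a left-turning chain anchored at $s$, use the triangle-area recurrence $D[u,v]=\area(suv)+\max_w D[w,u]$ over feasible predecessors, and achieve $O(N^2)$ by exploiting that the admissible predecessors form a contiguous interval in the angular order around $u$, so a single scan per middle vertex suffices. The paper credits this DP to Bautista-Santiago et al.\ and Fischer and adds two small points you glossed over: first, it explicitly prunes to vertices adjacent to $s$ (necessary since a clique containing $s$ uses only neighbors of $s$); second, to obtain the angular order of edges around \emph{every} vertex in $O(N^2)$ total time it invokes point--line duality and constructs the dual line arrangement, citing standard $O(N^2)$ arrangement algorithms. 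Your suggested ``$O(N)$-per-vertex merge using the global order'' does not actually work (angular orders around distinct centers are not related that simply), so the duality step is the concrete device that closes the gap between your $O(N^2\log N)$ fallback and the claimed $O(N^2)$.
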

\begin{proof}
    Pruning vertices, we can assume that all vertices of $H$ 
    are adjacent to $s$ and below $s$.
    We can then use the algorithm of Bautista-Santiago et al.~\cite{islands}, which is an improvement
    over the algorithm of Fischer~\cite{Fischer}, restricted to the edges that are in $H$.
    For completeness, we provide a quick overview of the approach.

    For this proof, let us denote $n=|V(H)|-1$.
    We sort the points of $V(H)\setminus \{ s\}$ counterclockwise radially from $s$.
    Let $x_1,x_2,\dots, x_n$ be the labeling of the points of $V(H)\setminus \{ s\}$
    according to that ordering.
    Thus, for each $i<j$ the sequence $x_i, s, x_j$ is a right turn.

    Using a standard point-line duality and constructing the arrangement of lines dual
    to the points $V(H)$, we get the circular order of the edges around each point $x_i$~\cite{LC85_rotations_On2}.
    For this we spend in total $O(n^2)$ time~\cite{CGL85_arrangement_On2, EORS86_arrangement_On2}.

    For each $i<j$ such that $x_ix_j\in E(H)$,
    let $\OPT[i,j]$ be the largest-area convex clique $U$
    that has $x_i$, $x_j$, and $s$ \emph{consecutively} along the boundary of $\conv(U)$.
    We then have
    \[
        \area(\varphi(H,s))= \max_{i<j, x_ix_j\in E(H)} \OPT[i,j].
    \]
    Taking the convention that $\max \emptyset = 0$,
    the values $\OPT[i,j]$ satisfy the following recursion
    \begin{align*}
        \OPT[i,j] ~=~ \area&(sx_ix_j) \\
                    &+~ \max \{ \OPT[h,i] \mid h<i,\, x_h x_i\in E(H), \text{ $x_h,x_i,x_j$ makes a left turn}\}.
    \end{align*}
    To argue the correctness of the recursion, one needs to observe that the right side
    of the equation does indeed correspond to the construction of a convex polygon.

    For any fixed $i$, the values $\OPT[i,*]$, $*>i$, can be computed in $O(n)$ time,
    provided that the edges incident to $x_i$ are already radially sorted and the values $\OPT[h,i]$
    are already available for all $h<i$. To achieve linear time,
    one performs a scan of the edges incident to $x_i$ and uses the property that
    \[
        \{ x_h x_i\in E(H) \mid h<i, \text{ $x_h,x_i,x_j$ makes a left turn}\}
    \]
    forms a contiguous sequence in the circular ordering of edges incident to $x_i$.

    Thus, we can fill in the whole table $\OPT[\cdot,\cdot]$ in time $O(n^2)$.
    With this we can compute $\area(\varphi(H,s))$ and construct an optimal
    solution $\varphi(H,s)$ by standard backtracking.
    See~\cite{islands} for additional details.
\end{proof}

In Section~\ref{sec:perimeter} we will also need to find a convex clique $U$ whose convex hull has maximum perimeter. It is easy to modify the algorithm to compute, for a point $s\in V(H)$, the value
\[
    \max \{ \per(U) \mid
                \text{$U\subseteq V(H)$ a convex clique, 
                		$s$ highest point in $U$}\}
\]
and a corresponding optimal solution. Here we assume a model of computation where the length of segments can be added in constant time.

%%%%%%%%%%%%%%%%%%%%%%%%%%%%%%%%%%%%%%%%%%%%%%%%%%%%%%%%%%%%%%%%%%%%%%%%%%%%%%%%%%%%%%%%%%%%%%%%%%%%%%%%%%%%%%%%%%%%%%%%%%%%%%%%%%%%%%%%%%%%%%%%
\section{Probability for visibility}
\label{sec:probability}

In this section we give a relation between $A^*(P)$ and
the probability that two random points in $P$ are visible. Such a relation is used later to bound the expected complexity of the visibility graph of a suitably sized random sample of points.

A polygon $P$ is \DEF{weakly visible} from a segment $s$ in $P$ if, 
for each point $p\in P$, there exists some point $x\in s$ 
such that $xp\subset P$.

\begin{theorem}\label{thm:prob1}
    Let $P$ be a unit-area polygon weakly visible from a diagonal $s$.
    Let $a$ and $b$ be two points chosen uniformly at random in $P$.
    Then
    \begin{itemize}
    \item[\textup(i\textup)] $ \Pr \left[ ab \subset P \right] ~\le~ 18\cdot  A^*(P)$ and\footnote{Item (i) is not used elsewhere in this paper. However, we believe that it is an interesting fact that strengthens Theorem~\ref{thm:prob2} for weakly edge-visible polygons.}
    \item[\textup(ii\textup)] $\Pr \left[ ab \subset P \textup{ and } ab \cap s \neq \emptyset \right] ~\le~ 6\cdot  A^*(P)$.
    \end{itemize}
\end{theorem}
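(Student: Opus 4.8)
The plan is to write each probability as an integral of an indicator over $P\times P$ — which is legitimate because $\area(P)=1$ — perform a change of variables adapted to $s$, integrate out the ``radial'' variables, and reduce each part to a purely geometric inequality between a chord integral and $A^*(P)$; that last inequality is the crux.

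Place $s$ on the $x$-axis, $s=\{(x,0)\mid 0\le x\le\ell\}$, and write $y(q)$ for the $y$-coordinate of a point $q$. For part~(ii) the event forces $a$ and $b$ onto opposite sides of the $x$-axis and the segment $ab$ to cross the $x$-axis at a point $c=(c_1,0)$ with $c_1\in[0,\ell]$. I would parametrize an ordered pair $(a,b)$, with $a$ above and $b$ below, by $(c_1,\theta,r,r')$, where $\theta\in(0,\pi)$ is the direction of the line through $a$ and $b$ and $r,r'$ are the distances from $c$ to $a$ and to $b$; a direct computation gives Jacobian $(r+r')\sin\theta$, and for fixed $(c_1,\theta)$ the condition $ab\subset P$ is exactly $r\le\mu^{+}(c_1,\theta)$ and $r'\le\mu^{-}(c_1,\theta)$, where $\mu^{\pm}(c_1,\theta)$ is the length of the longest segment of $P$ issuing from $c$ in direction $\pm(\cos\theta,\sin\theta)$. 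Integrating out $r$ and $r'$ reduces (ii) to
\[
   \int_{0}^{\ell}\!\int_{0}^{\pi}\ \mu^{+}\,\mu^{-}\,(\mu^{+}+\mu^{-})\,\sin\theta\ \,d\theta\,dc_1\ \le\ 6\,A^*(P),\qquad \mu^{\pm}=\mu^{\pm}(c_1,\theta).
\]

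Proving this inequality is, I expect, the main obstacle. The mechanism I would use is that weak visibility from $s$ produces convex subsets of $P$ anchored on $s$: if $q\in P$ and $J\subseteq s$ is a subsegment visible from $q$, then the triangle $CH(\{q\}\cup J)$ lies in $P$, so $\tfrac12\,|J|\cdot|y(q)|\le A^*(P)$. Now $\mu^{+}\sin\theta$ and $\mu^{-}\sin\theta$ are precisely the heights above and below $s$ of the endpoints $p^{+}:=c+\mu^{+}(\cos\theta,\sin\theta)$ and $p^{-}:=c-\mu^{-}(\cos\theta,\sin\theta)$ of the maximal chord of $P$ through $c$ in direction $\theta$; after a further change of variables at fixed $\theta$ (replacing $c_1$ by the perpendicular offset $\tau=c_1\sin\theta$ of the chord's supporting line, so $dc_1=d\tau/\sin\theta$), the left-hand side becomes an integral of areas swept by such chords. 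Bounding it by $6\,A^*(P)$ then comes down to (a) selecting, for almost every relevant chord, a subsegment $J\subseteq s$ so that the triangle $CH(\{q\}\cup J)$ (for a suitable point $q$ on the chord) genuinely contains the piece of $P$ being charged — the subtle point being that a union of segments anchored on $s$ need not be convex, so ``all three sides of a triangle lie in $P$'' does not by itself suffice — and (b) checking that these convex regions cover each point of $P$ with only an absolute-constant multiplicity; this is where Lemma~\ref{le:levels} and the hypothesis ``weakly visible from $s$'' (which, unlike for a general simple polygon, strongly constrains how the visibility windows of $s$ fit together) enter. The containment in~(a) and the multiplicity count in~(b) are the delicate points, and the constants $6$ and $18$ should fall out of this bookkeeping.

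For part~(i), I would parametrize an ordered visible pair $(a,b)$ by the midpoint $m$, direction $\theta\in[0,2\pi)$ and half-length $r$ of $ab$ (Jacobian $4r$), which after integrating out $r$ gives $\Pr[ab\subset P]=2\int_{P}\int_{0}^{2\pi}\nu(m,\theta)^2\,d\theta\,dm$, where $\nu(m,\theta)$ is the shorter of the two arms of the maximal chord of $P$ through $m$ in direction $\theta$. Charging this against the triangles $CH(\{q\}\cup J)$ — splitting off the contribution of chords that meet $s$, which is controlled exactly as in~(ii) — should yield the bound $18\,A^*(P)$, with the two events of~(ii) being the cleaner special case that gives $6\,A^*(P)$. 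In all cases the essential content is the same inequality between the visibility integral and $A^*(P)$, the weak-visibility-from-$s$ hypothesis being exactly what supplies the convex triangles that make the charging possible.
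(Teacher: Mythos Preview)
Your integral-geometric setup for part~(ii) is correct --- the change of variables and the Jacobian $(r+r')\sin\theta$ are right, and the reduction to the chord integral is valid --- but you stop exactly at the crux and never actually bound that integral by $6A^*(P)$. The ``charging'' scheme in (a)--(b) is only a sketch; in particular, your appeal to Lemma~\ref{le:levels} is misplaced (that lemma concerns horizontal level sets of a single convex body and is used only later, for the parallelogram $\Gamma$; it has nothing to do with this theorem), and your worry that ``all three sides of a triangle lie in $P$ does not by itself suffice'' is unfounded: $P$ is simply connected, so a Jordan region whose boundary lies in $P$ lies entirely in $P$ --- the paper uses precisely this. As written, then, the proposal is a framework rather than a proof: the inequality you isolate is the right target, but you have not established it, and the hints you give for how to do so are partly off.

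The paper's argument is considerably more elementary and bypasses chord integrals and covering/multiplicity bookkeeping entirely. One fixes $a$ and bounds the area of the locus $\{b\in P: ab\subset P,\ |y(b)|\le|y(a)|\}$ directly. For $b$ on the opposite side of $s$ (which alone handles part~(ii)), the segment $ab$ must cross the maximal visible window $a'a''\subseteq s$ of $a$; since the triangle $aa'a''\subset P$, one has $\tfrac12|a'a''|\,|y(a)|\le A^*$, and the cone from $a$ through $a'a''$, cut off at height $-|y(a)|$, is a trapezoid of area at most $3A^*$. For $b$ on the same side of $s$, one picks visibility segments $as_a$ and $bs_b$ sharing a point $c$; the two triangles of the resulting bowtie $abs_bs_a$ each have area $\le A^*$, and the two implied bounds on the horizontal offset $|a'b|$ --- one increasing and one decreasing in $y(c)$ --- combine (by optimizing over the unknown $y(c)$) to give $|a'b|\le 2A^*(y(a)+y(b))/y(a)^2$, which confines $b$ to a trapezoid of area $\le 6A^*$. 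Summing the two cases gives $9A^*$, doubling by symmetry gives $18A^*$, and part~(ii) uses only the opposite-side case to get $6A^*$. No integration over directions, no sweep, and no multiplicity count are needed.
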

\begin{proof}
    Without loss
    of generality we assume that $s$ is a horizontal segment on the $x$-axis. 
    In this proof we use $y(a)$ to denote the $y$-coordinate of a point $a$. Since the event that $y(a)y(b)=0$ has zero probability, we may assume that $y(a)\neq 0$ and $y(b)\neq 0$.
    To simplify the notation, in this proof we use $A^*=A^*(P)$.
    
    Consider first the point $a$ fixed.
    We first bound the probability that $a$ and $b$ are visible
    and $|y(a)|\ge |y(b)|$ to obtain the following:
    \[
        \Pr \left[ ab \subset P \text{ and }
        		|y(a)|\ge |y(b)| \right] 
        	~\le~ 9\cdot  A^*.
    \]
    This is seen showing that the set of points $b$
    satisfying ${ab} \subset P$ and $|y(a)|\ge |y(b)|$
    is inside a region of area at most $9A^*$.
    
    We distinguish two cases:
    \begin{itemize}
    	\item[1)] $y(a)y(b) > 0$  ($a$ and $b$ are on the same side of $s$).
    	\item[2)] $y(a)y(b) < 0$   ($a$ and $b$ are on the opposite sides of $s$).
	\end{itemize}

\begin{figure}
    \centering
    \includegraphics{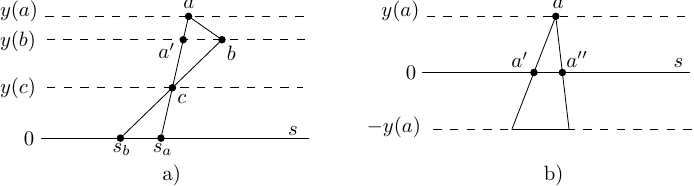}
    \caption{Situation in the proof of Theorem~\ref{thm:prob1}. 
    a) Points $a$ and $b$ are on the same side of $s$. 
    b) Points $a$ and $b$ are on different sides of $s$.}
    \label{fig:probBound}
\end{figure}

	Let us first consider case 1). We assume that $y(a)>0$, the other case is symmetric.
	Refer to Figure~\ref{fig:probBound}a).
    We know that $a$ sees some point $s_a$ on $s$.
    %With a shear transformation that preserves the area, 
    %we can assume that the segment $as_a$ is vertical. We may also assume that $b$ does not lie on the segment $as_a$ as the event that $b$ lies on $as_a$ has zero probability.
    We may assume that $b$ does not lie on the segment $as_a$ as the event that $b$ lies on $as_a$ has zero probability.
    We know that $b$ sees some point $s_b$ on $s$.
    We have a generalized polygon $Q=abs_bs_a$ (in which the sides $as_a$ and $bs_b$ may cross or some of the vertices may coincide) 
    whose boundary is in $P$, 
    and therefore the whole interior of $Q$ is also in $P$. Here we use that $P$ has no holes.
If $b$ sees $a$, we can choose $s_b$ so that
    $as_a$ and $bs_b$ share a common point: indeed, if $as_a$ and $bs_b$ are disjoint, then the polygon $Q$ is simple and thus $b$ sees $s_a$, so we can set $s_b$ to $s_a$.
    Let $c$ be the common point of $as_a$ and $bs_b$. By our assumptions, $y(c)<y(a)$.
        
    Let $h$ be a horizontal line through $b$ and let $a'$ be the intersection
    between $h$ and the segment $as_a$. 
    The interior of $Q$ is made of two triangles, $abc$ and $s_a s_b c$, 
    both contained in $P$ and thus each of them has area at most $A^*$. The triangle $s_a s_b c$ degenerates to a point if $s_a = s_b$.
    
    %For the triangle $abc$, we have $|ac|\cdot |a'b|\le 2A^*$, which implies that 
    For the triangle $abc$, we have $\area(abc) ~=~ \tfrac12 |a'b| \cdot (y(a)-y(c))$, which implies that 
    \begin{equation}\label{eq_1}
    |a'b|\le \frac{2A^*}{y(a)-y(c)}.
    \end{equation}
    If the triangle $s_a s_b c$ is not degenerate, we have $y(c)\cdot |s_as_b| \le 2A^*$. By the similarity of the triangles $s_a s_b c$ and $a'bc$, we have $|s_as_b|=|a'b|\cdot y(c) / (y(b)-y(c))$, which implies that
    \begin{equation}\label{eq_2}
    |a'b|\le \frac{2A^*}{y(c)^2}\cdot(y(b)-y(c)).
    \end{equation}
    Since the upper bound on $|a'b|$ is increasing in $y(c)$ in~(\ref{eq_1}) and decreasing in $y(c)$ in~(\ref{eq_2}), the minimum of the two upper bounds is maximal when they are equal; that is, when $y(c)=y(a)y(b) / (y(a)+y(b))$. It follows that
    \begin{equation}\label{eq_3}
    |a'b|\le \frac{2A^*}{y(a)^2}\cdot(y(a)+y(b)).
    \end{equation}
The condition~(\ref{eq_3}) implies that $b$ is inside a trapezoid of height $y(a)$ with bases of length $4 A^*/y(a)$ and $8 A^*/y(a)$, which has area $6A^*$. This finishes case 1).
    %
    %
    %%%%%%%-----shorter version with upper bound 8A*-----%%%%%%%
    %
    %If $y(c) \le y(a)/2$, then the area of the triangle $abc$ 
    %is at least $y(a) \cdot |a'b| / 4$.
    %If $y(c) \ge y(a)/2$, then $|s_a s_b| \ge |a'b|$ and the area 
    %of the triangle $s_a s_b c$ is at least $y(a) \cdot |a'b| / 4$.
    %We thus obtain
    %\[
    % 	|a'b| ~\le~ \frac{4A^*}{y(a)}.
    %\]
    %This condition says that $b$ is inside an axis-parallel
    %rectangle of height $y(a)$ and width $8 A^*/y(a)$.
    %This rectangle has area $8A^*$. This finishes case 1).
    %------------------------------------------------------ 

	We now consider case 2). 
	Refer to Figure~\ref{fig:probBound}b).
    Let $a'a''$ be the maximum subsegment of $s$ that is
    visible from $a$. Since the triangle $aa'a''$ is contained
    in $P$ we have
    \[
    	\area(aa'a'') ~=~ \tfrac12 |a'a''| \cdot y(a) ~\le~ A^*.
    \] 
    If $b$ sees $a$, then the segment $ab$ intersects the segment $a'a''$.
    Thus $b$ is contained in a trapezoid of height $y(a)$
    with bases of length $|a'a''|$ and $2\cdot |a'a''|$.
    Such trapezoid has area
    \[
    	\frac{|a'a''|+ 2|a'a''|}{2} \cdot y(a) ~=~ \tfrac32 |a'a''| \cdot y(a)
        ~\le~ 3A^*.
    \]
    This finishes case 2).
    
	Considering cases 1) and 2) together,
    for each fixed point $a\in P$ we have 
    \[
       \Pr \left[ ab \subset P \text{ and }
             |y(a)|\ge |y(b)| \right] 
       ~\le~ 9\cdot  A^*.
     \]
     Since this bound holds for each fixed $a$, it
     also holds when $a$ is chosen at random.
     
     Because of symmetry we have
     \[
       \Pr \left[ ab \subset P\right] ~=~
       		2\cdot \Pr \left[ ab \subset P \text{ and } |y(a)|\ge |y(b)| \right]~\le~ 18\cdot  A^*,
     \]
     which proves part (i) of the theorem.

     Part (ii) follows by a similar consideration using case 2) only.
\end{proof}

We can use a divide and conquer approach to obtain a bound for arbitrary polygons.
\begin{theorem}\label{thm:prob2}
    Let $P$ be an arbitrary unit-area polygon.
    Let $a$ and $b$ be two points chosen uniformly at random in $P$.
    Then
    \[
        \Pr \left[ {ab} \subset P \right] ~\le~ 
        		12\cdot  A^*(P) \cdot \bigl( 1+ \log_2 (1/A^*(P)) \bigr).
    \]
\end{theorem}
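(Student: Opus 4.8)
The plan is a divide-and-conquer recursion on $P$ that at every step reduces the ``crossing'' contribution to Theorem~\ref{thm:prob1}(ii). For a polygon $Q$ write
\[
 V(Q) ~:=~ \iint_{Q\times Q} \mathbf{1}[pq\subset Q]\,dp\,dq ~=~ \area(Q)^2\cdot\Pr_{p,q\sim Q}[pq\subset Q],
\]
so that with $\area(P)=1$ the quantity to bound is exactly $V(P)$. Using a balanced-cut lemma for simple polygons (a variant of Chazelle's polygon-cutting theorem; see the last paragraph), cut $P$ by a single chord $s$ into two subpolygons $P_1,P_2$ with $\area(P_i)\le\tfrac23\area(P)$, and write
\[
 V(P) ~=~ V(P_1) + V(P_2) + 2\iint_{P_1\times P_2}\mathbf{1}[pq\subset P]\,dp\,dq .
\]
Here I use that a straight segment meets the straight chord $s$ at most once: if $p,q\in P_1$ and $pq\subset P$ then $pq$ cannot cross $s$, hence $pq\subset \overline{P_1}$, so the $P_1\times P_1$ part of $V(P)$ is genuinely $V(P_1)$ (and likewise for $P_2$).

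The heart of the argument is that the cross term lives inside a polygon to which Theorem~\ref{thm:prob1}(ii) applies. Let $W\subseteq P$ be the subset of $P$ weakly visible from $s$ (the weak visibility polygon of the chord $s$ in $P$), and set $W_i:=W\cap P_i$. If $p\in P_1$, $q\in P_2$ and $pq\subset P$, then $pq$ crosses $s$ at a unique point $x$; the subsegment $px$ lies in $P_1$, and every point of it sees $x\in s$, so $px\subset W_1$, and similarly $xq\subset W_2$; hence $p,q\in W$ and $pq\subset W$. Since a segment with both endpoints in $\overline{W_i}$ cannot cross $s$, this yields
\[
 2\iint_{P_1\times P_2}\mathbf{1}[pq\subset P] ~\le~ 2\iint_{W_1\times W_2}\mathbf{1}[pq\subset W,\ pq\cap s\neq\emptyset] ~=~ \iint_{W\times W}\mathbf{1}[pq\subset W,\ pq\cap s\neq\emptyset].
\]
The polygon $W$ is weakly visible from the chord $s$ (each of its points sees a point of $s$ along a segment contained in $W$), so rescaling $W$ to unit area and applying Theorem~\ref{thm:prob1}(ii) — whose proof only uses that $s$ is a chord from which the polygon is weakly visible, not that it is a vertex-to-vertex diagonal — the right-hand side is at most $6\,A^*(W)\,\area(W)\le 6\,A^*(P)\,\area(W)\le 6\,A^*(P)\,\area(P)$. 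Altogether we get the recursion
\[
 V(P) ~\le~ V(P_1) + V(P_2) + 6\,A^*(P)\,\area(W), \qquad \area(W)\le\area(P).
\]

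Now unroll this to depth $L:=\lceil\log_{3/2}(1/A^*(P))\rceil$ (the recursion also stops at triangles, which satisfy $V(Q)=\area(Q)^2=\area(Q)\cdot A^*(Q)\le A^*(P)\cdot\area(Q)$ and, being pairwise interior-disjoint, contribute at most $A^*(P)\,\area(P)$ in total). The subpolygons at any fixed level are interior-disjoint subsets of $P$, and each carries its weak-visibility region $W_Q\subseteq Q$, so the total cross cost at one level is $\sum_Q 6A^*(P)\area(W_Q)\le 6A^*(P)\area(P)$; summed over the $O(\log(1/A^*(P)))$ levels this gives $O\bigl(A^*(P)\log(1/A^*(P))\bigr)$. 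A leaf at full depth has area at most $(2/3)^L\le A^*(P)$, so the leaves contribute $\sum_Q\area(Q)^2\le (2/3)^L\area(P)^2\le A^*(P)$. Substituting $\log_{3/2}(1/A^*)=\log_2(1/A^*)/\log_2(3/2)$ and tracking the constant $6$ from Theorem~\ref{thm:prob1}(ii), one checks that the total is at most $12\,A^*(P)\,\bigl(1+\log_2(1/A^*(P))\bigr)$, which equals the bound on $\Pr[ab\subset P]$ since $\area(P)=1$.

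The main obstacle, and the only genuinely delicate point, is the balanced cut: we need a single chord of $P$ splitting it into two pieces each of area at most $\tfrac23\area(P)$. This follows from (the area version of) Chazelle's polygon-cutting theorem, the only exceptional case being when one triangle of a triangulation of $P$ carries more than $\tfrac23$ of the area; but then $A^*(P)>\tfrac23$, so $12A^*(P)(1+\log_2(1/A^*(P)))>8>1\ge\Pr[ab\subset P]$ and the statement is trivial. The remaining ingredients — that the weak visibility region $W$ of a chord in a simple polygon is again a simple polygon, that it is weakly visible from that chord, and that Theorem~\ref{thm:prob1} is usable with $s$ a chord rather than a diagonal — are standard and present no real difficulty.
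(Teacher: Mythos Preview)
Your proof is correct and follows essentially the same route as the paper's: recursively split $P$ by balanced chords (Chazelle's polygon-cutting theorem) to depth $\log_{3/2}(1/A^*(P))$, bound each cross term via Theorem~\ref{thm:prob1}(ii) applied to the weak-visibility polygon of the cutting chord, sum the cross contributions $6A^*(P)$ per level, and bound the leaves by $\sum\area(Q)^2\le A^*(P)$. Your presentation is slightly more careful than the paper's in two respects --- you explicitly dispose of the degenerate case where a single triangle carries more than $\tfrac23$ of the area (so no balanced chord need exist), and you note that Theorem~\ref{thm:prob1} is stated for a ``diagonal'' but its proof only uses that $s$ is a chord --- but the underlying argument is the same.
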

\begin{proof}
	For this proof, let us set $A^*=A^*(P)$.
    
	For each polygon $Q$ there exists a segment that splits $Q$
    into two polygons, each of area at most $\tfrac23 \area(Q)$~\cite{polygon-cutting}.
    We recursively split $P$ using such a segment in each polygon, for
    $h=\log_{3/2} (1/A^*)$ levels.    
    Thus, at the bottommost level,
    each polygon has area bounded by $A^*$.
    
    At each level $\ell$ of the recursion, where $\ell=0,\dots,h$, 
    we have $2^\ell$ polygons, 
    which we denote by $Q_{\ell,1}, \dots, Q_{\ell, 2^\ell}$.
    In particular, $Q_{0,1} =P$. Since the polygons at each level $\ell$
    are disjoint, we have
    \[
    	\sum_{i=1}^{2^\ell} \area(Q_{\ell,i}) ~=~ \area(P) ~=~ 1.
    \] 
    For each polygon $Q_{\ell,i}$, where $\ell<h$, 
    let $e_{\ell,i}$ be the segment used to split $Q_{\ell,i}$. 
    Let $\widehat{Q_{\ell,i}}$ be the portion of $Q_{\ell,i}$ that is weakly
    visible from $e_{\ell,i}$. 
    At each level $\ell<h$ we have
    \[
    	\sum_{i=1}^{2^\ell} \area(\widehat{Q_{\ell,i}}) 
        ~\le~ \sum_{i=1}^{2^\ell} \area(Q_{\ell,i}) 
        ~=~ 1.
    \]
    Let $\E_{a,b,\ell,i}$ be the event 
    $ab \subset \widehat{Q_{\ell,i}} \text{ and } ab \cap e_{\ell,i} \neq \emptyset$.
    Using the union bound and part (ii) of Theorem~\ref{thm:prob1} we obtain
    \begin{align*}
        \Pr \left[ \bigcup_{\ell=0}^{h-1}\bigcup_{i=1}^{2^\ell} 
        				\E_{a,b,\ell,i} \right] 
        ~&\le~ \sum_{\ell=0}^{h-1} \Pr \left[ 
        		\bigcup_{i=1}^{2^\ell} \E_{a,b,\ell,i} \right] \\
        &=~ \sum_{\ell=0}^{h-1} \sum_{i=1}^{2^\ell} 
        			\Pr [ \E_{a,b,\ell,i} ] \\
        &=~ \sum_{\ell=0}^{h-1} \sum_{i=1}^{2^\ell}
        		\Pr [\E_{a,b,\ell,i} \mid 
                   	 a\in \widehat{Q_{\ell,i}}, b\in \widehat{Q_{\ell,i}}]
                \cdot\Pr[a\in\widehat{Q_{\ell,i}}, b\in\widehat{Q_{\ell,i}}] \\
        &\le~ \sum_{\ell=0}^{h-1} \sum_{i=1}^{2^\ell}\left( 
        		6\cdot \frac{A^*}{\area(\widehat{Q_{\ell,i}})}\cdot
                (\area(\widehat{Q_{\ell,i}}))^2 \right) \\
        & =~ 6\cdot A^* \sum_{\ell=0}^{h-1} \sum_{i=1}^{2^\ell}
        		\area(\widehat{Q_{\ell,i}}) \\
        & \le~ 6\cdot A^* \sum_{\ell=0}^{h-1} 1\\
        & =~ 6\cdot A^*\cdot h.
    \end{align*}
    
    At the bottommost level $h$, we can use that
    $\area(Q_{h,i})\le A^*$ for each $i$ to obtain
    \begin{align*}
        \Pr \left[\bigcup_{i=1}^{2^h} [ab \subset Q_{h,i} ] \right] ~&=~
        \sum_{i=1}^{2^h} \Pr \left[ab \subset Q_{h,i}  \right] \\
        & \le~ \sum_{i=1}^{2^h} \Pr \left[a\in  Q_{h,i},  b\in Q_{h,i}\right] \\
        & =~ \sum_{i=1}^{2^h} (\area(Q_{h,i}))^2 \\
        & \le ~ \sum_{i=1}^{2^h}  A^*\cdot (\area(Q_{h,i})) \\
        & = ~ A^*.
    \end{align*}
    We then note that, if $a$ sees $b$, then the event $\E_{a,b,\ell,i}$ 
    occurs for some $\ell<h$ and $i\le 2^\ell$,
    or $a$ and $b$ are in the same polygon $Q_{h,i}$, where $i\le 2^h$.
    Thus 
    \begin{align*}
    	\Pr \left[ {ab} \subset P \right] 
        ~&\le~
        \Pr \left[ \bigcup_{\ell=0}^{h-1}\bigcup_{i=1}^{2^\ell} 
        					[\E_{a,b,\ell,i} ] \right]
        + \Pr \left[\bigcup_{i=1}^{2^h} [ab \subset Q_{h,i} ] \right]\\
        &\le~ 6\cdot A^*\cdot h + A^*\\
        &  =~ A^* + 6\cdot A^*\cdot \log_{3/2} (1/A^*)\\
        &  \le~ A^* + 12\cdot A^*\cdot \log_{2} (1/A^*).
        \qedhere
    \end{align*}
\end{proof}

In the conference version of our paper we proved and used Theorem~\ref{thm:prob2}.
We included the proof here for completeness and archiving purposes. Also, it is a key contribution of this paper to realize that such connection between the probability of being co-visible and the area of the largest convex body could exist. This result has been improved by Balko et al.~\cite{BJVW15}. Using their new result slightly improves the final running time of our algorithms. Thus, we will use in the rest of our paper the following theorem.

\begin{theorem}[Corollary 4 in~\cite{BJVW15}]
\label{thm:BJVW15}
    Let $P$ be an arbitrary unit-area polygon.
    Let $a$ and $b$ be two points chosen uniformly at random in $P$.
    Then
    \[
        \Pr \left[ {ab} \subset P \right] ~\le~ 
        		180\cdot  A^*(P).
    \]
\end{theorem}

%%%%%%%%%%%%%%%%%%%%%%%%%%%%%%%%%%%%%%%%%%%%%%%%%%%%%%%%%%%%%%%%%%%%%%%%%%%%%%%%%%%%%%%%%%%%%%%%%%%%%%%%%%%%%%%%%%%%%%%%%%%%%%%%%%%%%%%%%%%%%%%%
\section{Algorithm}
\label{sec:algorithm}

In this section we discuss the eventual algorithm.
The input to the algorithm is a polygon $P$, a parameter $\eps\in (0,1)$,
and a parameter $\delta\in(0,1)$. Without loss of
generality we assume that $P$ has unit area.
%The input to the algorithm is a polygon $P$, which without loss of
%generality we assume that has unit area, a parameter $\eps\in (0,1)$,
%and a parameter $\delta\in(0,1)$.
The algorithm, called {\sc LargePotato}, 
is summarized in Figure~\ref{fi:algo}. 
In the first part of the section we explain in detail
each step and the notation that is still undefined.
In the second part we analyze the algorithm.

\begin{figure}
  \ovalbox{
  \begin{minipage}{.95\hsize}
   \centering
   \normalsize
    \begin{minipage}{.92\hsize}
       \begin{algorithm}{\sc LargePotato}{
       \qinput{Unit-area polygon $P$, $\eps \in (0,1)$, and  $\delta\in (0,1)$}}
       find a value $A(P)$ such that $A(P)\le A^*(P)\le C_2 \cdot A(P)$;\\
       $r\qlet 60 /A(P)$;\\
       $best\qlet \emptyset$;\\
       \qrepeat~$3\log_2 (1/\delta)$ times\\
       		$R\qlet $ sample $r$ points uniformly at random in $P$;\\
            \qif $G(P,R)$ has at most $C_3 \cdot n$ 
            edges \qthen\\
            	compute $G(P,R)$;\\
                \qfor $ab\in E(G(P,R))$ \qdo\\
                	$R_{ab}\qlet$ sample $96\cdot C_1\cdot C_2/(\eps/2)^{3/2}$ points 
                    	uniformly at random in the parallelogram $\Gamma(a,b,C_2\cdot A(P))$;\\
                    $S_{ab}\qlet$ sample $288\cdot C_2/\eps$ points 
                    	uniformly at random in the parallelogram $\Gamma(a,b,C_2\cdot A(P))$;\\
                    $G_{ab}\qlet G(P, (R_{ab}\cup S_{ab})\cap P)$;\\
                    \qfor $s\in S_{ab}$ \qdo\\
                    	$U\qlet \varphi(G_{ab},s)$;\\
                    	\qif $\area(U) > \area (best)$ \qthen $best \qlet U$;
                    	\qfi
                     \qrof
                \qrof
            \qfi
        \qendrepeat\\
		\qreturn $\conv(best)$;
       \end{algorithm}
       \end{minipage}
     \end{minipage}
}
    \caption{Algorithm. 
    	The constant $C_1$ is from Lemma~\ref{le:convex}.
        The constant $C_2$ is the approximation factor from Hall-Holt et al.~\cite{hkms-06}; see Section~\ref{sec:description}.
        The constant $C_3$ is from Lemma~\ref{le:sizeG}.
    }
    \label{fi:algo}
\end{figure}

\subsection{Description}
\label{sec:description}

\paragraph{Sampling points.}
Let $A(P)$ be a constant-factor approximation for $A^*(P)$. Thus,
$A(P)\le A^*(P) \le C_2 A(P)$ for some constant $C_2\ge 1$.
Hall-Holt et al.~\cite{hkms-06} provide an algorithm to compute such value $A(P)$ in $O(n\log n)$ time.

Let us define $r\mathrel{\mathop:}=\frac{60}{A(P)}$.
Since the largest triangle in any triangulation of $P$ has area at least $1/n$,
we have $A^*(P)\ge 1/n$ and thus $r = O(n)$.

Let $R$ be a sample of $r$ points chosen independently at random from the polygon $P$.
The sample $R$ can be constructed in $O(n + r\log n)$ time, as follows.
By the linear-time algorithm\footnote{Computing a triangulation of $P$ is not the bottleneck of our algorithm. Since Chazelle's algorithm is complicated, for practical purposes it would be easier to use a simpler triangulation algorithm running in $O(n \log n)$ time such as the one described in~\cite{CGbook}.} of Chazelle~\cite{triang-linear-time}, 
we compute a triangulation of $P$, giving triangles $T_1,\dots,T_{n-2}$.
We then compute the prefix sums $S_i=\area(T_1)+\dots+\area(T_i)$
for $i=1,\dots,n-2$. 
This is done in $O(n)$ time.
To sample a point, we select a random number $x$ in the interval $[0,1]$,
perform a binary search to find the smallest index $j$ such that
$x\le S_j$, and sample a random point inside $T_j$. 
A random point inside $T_j$ can be generated using a random
point inside a parallelogram that contains two congruent copies of $T_j$;
such a point can be generated using two random numbers in the interval $[0,1]$.
In total, each point takes $O(\log n)$ time plus the time needed to generate
three random numbers in the interval $[0,1]$.
A similar approach is described in~\cite{shop}.

\paragraph{Size of the visibility graph.}
Using the expected number of edges in the visibility graph $G(P,R)$ 
and Markov's inequality lead to the following bound.

\begin{lemma}
\label{le:sizeG}
    There exists a constant $C_3>0$ such that, 
    with probability at least $5/6$, the graph $G(P,R)$ has
    at most $C_3 \cdot n$ edges.
\end{lemma}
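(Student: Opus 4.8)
The plan is to bound the expected number of edges of $G(P,R)$ and then apply Markov's inequality. The expected number of edges is
\[
    \EE[|E(G(P,R))|] ~=~ \binom{|R|}{2} \cdot \Pr[ab \subset P],
\]
where $a,b$ are two points chosen uniformly at random in $P$, since by linearity of expectation each of the $\binom{|R|}{2}$ pairs contributes the probability that the corresponding segment lies in $P$. Here $|R| = r = 60/A(P) = O(n)$, so $\binom{|R|}{2} = O(n^2/A(P)^2)$. The key input is Theorem~\ref{thm:prob2}, which (since $P$ has unit area) gives $\Pr[ab \subset P] \le 12 A^*(P)(1 + \log_2(1/A^*(P)))$.

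Combining these, and using $A(P) \le A^*(P) \le C_2 A(P)$, I would estimate
\[
    \EE[|E(G(P,R))|] ~\le~ \frac{(60/A(P))^2}{2} \cdot 12\, A^*(P)\bigl(1 + \log_2(1/A^*(P))\bigr)
    ~\le~ \frac{60^2 \cdot 12 \cdot C_2}{2} \cdot \frac{1 + \log_2(1/A^*(P))}{A(P)}.
\]
Since $A^*(P) \ge 1/n$ we have $\log_2(1/A^*(P)) \le \log_2 n$, so $1 + \log_2(1/A^*(P)) = O(\log_2 n)$; and since $1/A(P) \le C_2/A^*(P) \le C_2 n$, we get $1/A(P) = O(n)$. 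Hence $\EE[|E(G(P,R))|] = O(n \log_2 n)$; call the implied constant $C_3/6$, so that $\EE[|E(G(P,R))|] \le (C_3/6)\, n \log_2 n$. By Markov's inequality,
\[
    \Pr\bigl[ |E(G(P,R))| > C_3 \cdot n \log_2 n \bigr] ~\le~ \frac{\EE[|E(G(P,R))|]}{C_3 \cdot n \log_2 n} ~\le~ \frac16,
\]
which gives the claim with probability at least $5/6$.

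The main obstacle, such as it is, is purely bookkeeping: one must be careful that the factor $1 + \log_2(1/A^*(P))$ is genuinely $O(\log_2 n)$ and not merely $O(\log_2 n)$ plus a constant that could in principle dominate when $n$ is small — but for $n$ bounded the whole statement is trivial (any constant $C_3$ works), so this is not a real issue. One should also double-check the edge case $\binom{|R|}{2}$ versus $|R|^2/2$ and the monotonicity of $x(1+\log_2(1/x))$ on the relevant range so that plugging in the bound $A^*(P) \ge 1/n$ is legitimate; the function $x \mapsto x(1 + \log_2(1/x))$ is increasing on $(0, 2^{-1}\ln 2 \cdots)$ and anyway we only need the weaker fact that $1/A^*(P) \le n$ together with monotonicity of $\log_2$. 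Everything else is a direct substitution, so I expect the proof to be short.
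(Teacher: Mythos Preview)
Your proposal is correct and follows essentially the same argument as the paper: bound $\EE[|E(G(P,R))|]$ via linearity of expectation and Theorem~\ref{thm:prob2}, use $A^*(P)\ge 1/n$ to get an $O(n\log_2 n)$ bound, then apply Markov's inequality. If anything, your tracking of the constant $C_2$ when passing from $A(P)$ to $A^*(P)$ is slightly more careful than the paper's own computation.
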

\begin{proof}
    In this proof we use $G\mathrel{\mathop:}=G(P,R)$.
    Using linearity of expectation,
    Theorem~\ref{thm:BJVW15}, the estimates $1/n \le A^*(P)\le C_2A(P)$ and the obvious fact that $n\ge 3$,
    we obtain
    \begin{align*}
    	\EE [ |E(G)|] ~&=~ \binom{r}{2} \cdot 
        	\Pr[\text{two random points are visible in $P$}]\\
            &\le~ \frac 12 \left(\frac{60}{A(P)}\right)^2
            	\cdot 180\cdot  A^*(P)\\
            &\le ~ 324000 \cdot \frac{A^*(P)}{A(P)}\cdot \frac{1}{A(P)}\\
            &\le~ 324000 \cdot C_2 \cdot C_2\cdot n.
    \end{align*}
    Let us take $C_3= 6\cdot 324000 \cdot (C_2)^2$.
    By Markov's inequality we have
\belowdisplayskip=-8pt
    \[
    	\Pr [ |E(G)|\ge C_3\cdot n] ~\le~ 
          \frac{\EE [ |E(G)|]}{C_3\cdot n} ~\le~
          \frac 16\, . 
    \] 
\qedhere
\end{proof}

\paragraph{Constructing the visibility graph and checking its size.}
We will use the following result by Ben-Moshe et al.~\cite{visibility-graph}.

\begin{theorem}[Ben-Moshe et al.~\cite{visibility-graph}]
\label{thm:visibility}
    Let $P$ be a simple polygon with $n$ vertices and 
    let $R$ be a set of $r$ points inside $P$.
    The visibility graph $G(P,R)$ can be constructed in time 
    $O(n + r \log r \log (rn)+k)$,
    where $k$ is the number of edges in $G(P,R)$.
\end{theorem}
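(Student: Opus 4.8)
Theorem~\ref{thm:visibility} is quoted from Ben-Moshe et al.~\cite{visibility-graph}; below I outline the plan I would follow to obtain a bound of this shape. The starting point is $O(n)$-time preprocessing of $P$: triangulate $P$ in linear time with Chazelle's algorithm~\cite{triang-linear-time}, build on top of the triangulation a ray-shooting data structure that answers queries in $O(\log n)$ time after linear-time preprocessing, and locate the triangle containing each point of $R$. The dual tree $T$ of the triangulation has maximum degree $3$. I would prune repeatedly every leaf (ear) triangle that contains no point of $R$ until every leaf contains one, and contract every maximal path of empty degree-$2$ triangles into a single ``corridor'' edge, obtaining a tree $\mathcal T$ with only $O(r)$ nodes. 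Queries inside corridors are still answered correctly by the ray-shooting structure built on the original $P$, so this pruning costs $O(n)$ once.

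The core is a divide-and-conquer on $\mathcal T$. Weighting each triangle by the number of points of $R$ it contains (boundary points broken by a fixed rule) and using that a bounded-degree tree admits an edge separator splitting the weight into two constant fractions, I pick a diagonal $d$ of $P$ that separates $P$ into $P_1,P_2$ with a balanced number of sample points (the degenerate case of many points inside a single triangle is a base case, where all pairs see each other). I recurse on $P_1$ and on $P_2$ for the same-side edges, and compute the \emph{crossing edges} separately: since any crossing segment meets $d$, a point $p\in P_1\cap R$ sees $q\in P_2\cap R$ iff there is $x\in d$ with $px\subseteq P_1$, $xq\subseteq P_2$, and $p,x,q$ collinear. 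For each point I use $O(\log n)$ ray-shooting queries to compute the sub-interval of $d$ it sees and the two bounding rays of the associated visibility cone; the crossing edges are then the collinear cone-through-cone incidences, which I report output-sensitively by sorting the cones of $P_1\cap R$ and of $P_2\cap R$ along $d$ and running a merge-style sweep in time $O(r'\log r' + k')$, where $r'$ and $k'$ count the points and reported crossing pairs in the current sub-problem.

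For the analysis, each visible pair straddles exactly one separating diagonal over the whole recursion, so the reporting terms sum to $O(k)$ and the separator work sums to $O(r)$ per level. Balanced splits give recursion depth $O(\log r)$, and at each level the total cost is $O(r\log n)$ for ray shooting plus $O(r\log r)$ for the sweeps, for a total of $O\!\left(n + r\log r\,(\log n + \log r) + k\right) = O\!\left(n + r\log r\log(rn) + k\right)$.

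The main obstacle is the crossing-edges subroutine: one must compute, for every point, the portion of the separating diagonal it sees and then match the two sides in an output-sensitive way, while \emph{localizing} each such computation so that a fixed vertex of $P$ is not touched at each of the $\Theta(\log r)$ levels — this is exactly what contracting $\mathcal T$ to size $O(r)$ together with the global $O(\log n)$-time ray-shooting structure provides. Degenerate configurations (three points collinear, or points lying on a chosen diagonal) also require a consistent tie-breaking convention so that the ``iff'' characterization of crossing visibility remains valid.
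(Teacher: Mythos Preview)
The paper does not prove Theorem~\ref{thm:visibility}: it is quoted from Ben-Moshe et al.~\cite{visibility-graph} and used purely as a black box (see the paragraph ``Constructing the visibility graph and checking its size'' following the theorem statement). There is therefore no proof in the paper to compare your attempt against.

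As for your reconstruction itself, the global scheme---linear-time triangulation, a global $O(\log n)$ ray-shooting structure, contracting the dual tree to $O(r)$ nodes, and balanced divide-and-conquer on a separating diagonal---is the right skeleton and accounts correctly for the $O(n)$, the $\log r$ recursion depth, and the fact that each reported edge is charged to a unique level. The gap is the crossing-edges subroutine. Knowing, for every point, the single subinterval of the diagonal $d$ it sees is necessary but not sufficient: a pair $(p,q)$ with $p\in P_1$, $q\in P_2$ is an edge iff the line $pq$ meets $d$ at a point lying in $I_p\cap I_q$, and this couples an interval constraint on $d$ with a slope constraint through $p$ and $q$. A ``merge-style sweep'' of interval endpoints along $d$ can have $\Theta(r'^2)$ overlapping interval pairs that are \emph{not} visibility edges, so you must explain how you avoid touching those; as stated, your $O(r'\log r'+k')$ claim is unsupported. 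This step is exactly where the cited paper does its real work (angular orderings around the extreme visible points on $d$ and an output-sensitive traversal), and you would need to supply an argument of comparable detail before the bound follows.
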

In line 6 of the algorithm {\sc LargePotato}, we want to 
check whether $G(P,R)$ has at most $C_3 \cdot n$ edges. 
For this we use that the algorithm of 
Theorem~\ref{thm:visibility} is output-sensitive and takes time 
$T_{\text{\cite{visibility-graph}}}(n,r,k)=O(n + r \log r \log (rn)+k)$.
We run the algorithm of Theorem~\ref{thm:visibility} for at most
$T_{\text{\cite{visibility-graph}}}(n,r,C_3 \cdot n)$ 
steps. If the construction of $G(P,R)$ is not finished, we know that
$|E(G(P,R))|> C_3 \cdot n$. Otherwise the algorithm outputs whether $|E(G(P,R))|\le C_3 \cdot n$ or not.
Thus, the test in line 6 can be made in time
    \begin{align*}
    	T_{\text{\cite{visibility-graph}}}(n,r,C_3 \cdot n)
        ~&=~ O(n + r \log r \log(rn) + C_3 \cdot n) \\
        &=~ O(n + n\log^2 n + n)\\
        &=~ O(n\log^2 n).
     \end{align*}
The construction in line 7 takes the same time, if it is actually made.

\begin{remark} 
For each constant $\eps$, the bottleneck in the running time of our algorithm is here, in our use of Theorem~\ref{thm:visibility} to compute the visibility graph. With the improvement of Balko et al.~\cite{BJVW15}, stated in Theorem~\ref{thm:BJVW15}, all other steps can be made to run in time $O(n\log n \log (1/\delta))$ (for constant $\eps$).
\end{remark}

\paragraph{Work for each edge $ab$.}
We now discuss the work done in lines 9--14 for each edge $ab$ of $G(P,R)$.
The parallelogram $\Gamma(a,b,C_2\cdot A(P))$ was defined in 
Section~\ref{sec:gamma}. Note that $\Gamma(a,b,C_2\cdot A(P))$ has area 
\[
	12\cdot C_2 \cdot A(P) ~\le~ 12\cdot C_2 \cdot A^*(P) ~=~ \Theta (A^*(P)). 
\]
Since $\Gamma(a,b,C_2\cdot A(P))$ is a parallelogram, it is straightforward to
construct the random samples $R_{ab}$ and $S_{ab}$. 
Note that $|R_{ab}|=\Theta(\eps^{-3/2})$ and $|S_{ab}|=\Theta(\eps^{-1})$.
We select the subset of $R_{ab}\cup S_{ab}$ contained in the polygon $P$ 
and construct its visibility graph $G_{ab}$.
We then compute a maximum-area convex clique in $G_{ab}$ among
those cliques whose highest vertex $s$ is from $S_{ab}$. We make this restriction
to reduce the number of candidate highest points
from $\Theta (\eps^{-3/2})$ to $\Theta (\eps^{-1})$.
This is equivalent to computing $\varphi(G_{ab},s)$ for each $s\in S_{ab}$, which is discussed in Section~\ref{sec:phi}.
Finally, we compare the solutions $U_{ab}$ that we obtain
against the solution stored in the variable $best$ and, if appropriate,
update $best$.

\subsection{Analysis}

\begin{lemma}[Time bound]
\label{le:time}
	For each $\eps \in (0,1)$,
    the algorithm {\sc LargePotato} can be adapted to use 
    $O\left( n(\log^2 n + (1/\eps^3) \log n + 1/\eps^4) \log(1/\delta) \right)$ time.
\end{lemma}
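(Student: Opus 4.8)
The plan is to account for the cost of each line of {\sc LargePotato}, using the bounds already established, and then observe where the nominal bound exceeds $O((n/\eps^4)\log^2 n\log(1/\delta))$ so that an ``adaptation'' is needed. First I would note that the outer loop runs $3\log_2(1/\delta)$ times, so it suffices to bound the cost of one iteration by $O((n/\eps^4)\log^2 n)$. Within one iteration: sampling $R$ takes $O(n+r\log n)=O(n\log n)$ time since $r=60/A(P)=O(n)$; the size check in line~6 and the construction in line~7 each take $O(n\log^2 n)$ time by the output-sensitive bound from Theorem~\ref{thm:visibility} truncated at $T_{\text{\cite{visibility-graph}}}(n,r,C_3 n\log_2 n)$, as computed in the ``Constructing the visibility graph'' paragraph. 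So the only potentially expensive part is the loop of lines 8--14 over the (at most $C_3 n\log_2 n$) edges $ab$ of $G(P,R)$.

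Next I would bound the work per edge $ab$. We have $|R_{ab}|=\Theta(\eps^{-3/2})$ and $|S_{ab}|=\Theta(\eps^{-1})$, so $|(R_{ab}\cup S_{ab})\cap P|=O(\eps^{-3/2})$. Building $G_{ab}$ naively (testing each pair of points for visibility inside $P$, each test by a ray-shooting/segment-in-polygon query after $O(n)$ preprocessing of $P$) costs $O(\eps^{-3}\cdot\text{polylog})$; and computing $\varphi(G_{ab},s)$ for each $s\in S_{ab}$ costs $O(|V(G_{ab})|^2)=O(\eps^{-3})$ per call by Lemma~\ref{le:phi}, hence $O(\eps^{-1}\cdot\eps^{-3})=O(\eps^{-4})$ over all $s\in S_{ab}$. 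Summed over all $O(n\log n)$ edges this already gives $O((n/\eps^4)\log n\cdot\text{polylog})$, which matches the target up to the logarithmic factors; one cheaper observation is that the preprocessing of $P$ for visibility queries is done once per iteration (or once globally), not once per edge. The remaining point is to make sure the per-edge visibility-query cost does not carry an extra factor of $n$: one preprocesses $P$ in $O(n)$ time for $O(\log n)$-time segment-inside-polygon queries, so each of the $O(\eps^{-3})$ pair tests for $G_{ab}$ costs $O(\log n)$, giving $O(\eps^{-3}\log n)$ per edge and $O((n/\eps^3)\log^2 n)$ over all edges in one iteration — dominated by the $\varphi$ computations.

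The phrase ``can be adapted'' signals that the straightforward implementation is a hair too slow and one must shave a factor. The main obstacle — and the place where the adaptation lives — is the $\Theta(\eps^{-4})$ term coming from $\eps^{-1}$ calls to a quadratic-time $\varphi$ routine on $\eps^{-3/2}$ points, which over $\Theta(n\log n)$ edges gives $O((n/\eps^4)\log n)$, already fine, versus possibly building $G_{ab}$ redundantly or re-preprocessing $P$; the fix is (i) preprocess $P$ once per outer iteration for fast visibility queries rather than per edge, and (ii) if needed, reuse the radial-sort/duality structure across the $\eps^{-1}$ calls to $\varphi$ on the common vertex set $V(G_{ab})$ so that the total for one edge is $O(\eps^{-3})$ plus a single $O(\eps^{-3})$ sort, not $\eps^{-1}$ independent $O(\eps^{-3})$ sorts. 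Putting the pieces together: one iteration costs $O(n\log^2 n)+O(n\log n)\cdot O(\eps^{-4})=O((n/\eps^4)\log^2 n)$, and multiplying by the $3\log_2(1/\delta)$ iterations yields the claimed $O((n/\eps^4)\log^2 n\log(1/\delta))$ bound. I would finish by remarking that the dominant term is indeed the per-edge $\varphi$-computations, and that all other costs (sampling, graph size check, graph construction) are subsumed.
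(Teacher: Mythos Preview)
Your proposal is correct and follows essentially the same route as the paper: preprocess $P$ once for $O(\log n)$-time segment-in-polygon queries, bound the per-edge work by $O(\eps^{-3}\log n)$ for building $G_{ab}$ plus $|S_{ab}|\cdot O(|V(G_{ab})|^2)=O(\eps^{-4})$ for the $\varphi$ calls, multiply by the $O(n\log n)$ edges and the $O(\log(1/\delta))$ outer iterations.

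One remark: you over-interpret the phrase ``can be adapted.'' In the paper the only adaptation is exactly the one you already identified in your point~(i), namely preprocessing $P$ once (via Chazelle et al.) so that segment-containment queries cost $O(\log n)$ rather than $O(n)$. Your suggested point~(ii) --- sharing the radial-sort/arrangement structure across the $|S_{ab}|$ calls to $\varphi$ --- is not needed for the stated bound (indeed $|S_{ab}|\cdot O(\eps^{-3})=O(\eps^{-4})$ already suffices), and the paper does not use it. So your speculation that ``the straightforward implementation is a hair too slow'' beyond the query preprocessing is unfounded; once the $O(\log n)$ query structure is in place, the naive accounting already meets the target.
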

\begin{proof}
    The value $A(P)$ can be computed in time $O(n\log n)$, as discussed before.

	We first preprocess the polygon $P$ for segment containment using
    the algorithm of Chazelle et al.~\cite{ray-shooting}:
    after $O(n)$ preprocessing time we can answer whether
    a query segment is contained in $P$ in $O(\log n)$ time.
    In particular, we can decide in $O(\log n)$ time
    whether a query point is in $P$. 
    
    We claim that each iteration of the for-loop (lines 9--14) takes 
    $O((1/\eps^3) \log n + 1/\eps^4)$ time. 
    The samples $R_{ab}$ and $S_{ab}$ can be constructed
    in $O(|R_{ab}|+|S_{ab}|)=O((1/\eps^{3/2}) \log n)$. 
    We construct $(R_{ab}\cup S_{ab})\cap P$
    by testing each point of $R_{ab}\cup S_{ab}$ for containment in $P$.
    The graph $G_{ab}$ is constructed
    by checking for each pair of points from $(R_{ab}\cup S_{ab})\cap P$
    whether the corresponding segment is contained in $P$.
    Thus $G_{ab}$ is constructed in 
    $O((1/\eps^{3/2})^2 \log n)= O((1/\eps^3)\log n)$ time.
    By Lemma~\ref{le:phi}, each iteration of the lines 13--14 takes time 
    $O(|R_{ab}|^2) = O(1/\eps^3)$.
    Thus the running time of the for loop in lines 12--14
    takes time $O(|S_{ab}|\cdot (1/\eps^3))= O(1/\eps^4)$.    
    The claim follows.
    
    We next show that each iteration of the repeat-loop
    (lines 5--14) takes $O(n \log^2 n + (n/\eps^3) \log n + n/\eps^4)$ time. 
    Since $r=O(n)$, the sample $R$ can be computed
    in $O(n \log n)$ time, as discussed in 
    Section~\ref{sec:description}. 
    As discussed before, we can make the test in line 6 in
    $O(n \log^2 n)$ time.
    
    If $G(P,R)$ has more than $C_3 \cdot n$ edges,
    this finishes the time spent in the iteration.
    Otherwise, we make $O(C_3\cdot n)=O(n)$ iterations
    of the for-loop in lines 9--14. Since each iteration 
    of the for-loop takes $O((1/\eps^3) \log n + 1/\eps^4)$ time, 
    as argued earlier in this proof, the bound per iteration 
    of the repeat-loop follows.
\end{proof}

\begin{lemma}[Correctness of one iteration]%
\label{le:correctness}
    In one iteration of the repeat-loop (lines 5--14) of the algorithm
    {\sc LargePotato}
    the algorithm finds a convex polygon of area at least $(1-\eps)A^*(P)$
    with probability at least $1/4$.
\end{lemma}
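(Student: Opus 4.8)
The plan is to fix an optimal convex polygon $K^*$ with $\area(K^*)=A^*(P)$ and trace through the iteration to show that, with constant probability, a good inner approximation to $K^*$ survives all the sampling stages. First I would set up the high-probability events that must hold simultaneously: (E1) the visibility graph $G(P,R)$ has at most $C_3 n\log_2 n$ edges, so that lines 7--14 are actually executed (Lemma~\ref{le:sizeG}, probability $\ge 5/6$); (E2) the sample $R$ contains two points $a,b\in K^*$ with $y(a)\ge y_{4/5}(K^*)$ and $y(b)\le y_{1/5}(K^*)$, so that by Lemma~\ref{le:gamma} the parallelogram $\Gamma(a,b,C_2 A(P))$ contains $K^*$ — here I use $A\mathrel{:=}C_2 A(P)\ge A^*(P)=\area(K^*)$ and invoke Lemma~\ref{le:samplegamma} with $r=60/A(P)\ge 60\cdot\area(P)/\area(K^*)$, giving probability $\ge 2/3$. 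Note that $a,b\in K^*\subseteq P$ are mutually visible (convexity), so $ab$ is genuinely an edge of $G(P,R)$ and the corresponding iteration of the for-loop in lines 9--14 really does process this pair.

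Next I would condition on $a,b$ as above and analyze the inner loop for this particular edge $ab$. Since $\Gamma \mathrel{:=} \Gamma(a,b,C_2 A(P))\supseteq K^*$ and $\area(\Gamma)=12 C_2 A(P)\le 12 C_2 A^*(P)$, we have $\area(\Gamma)/\area(K^*)\le 12 C_2$. I want to apply Lemma~\ref{le:sampleinside2} to the sample $R_{ab}$ inside $\Gamma$ with error parameter $\eps/2$: this needs $|R_{ab}|\ge 4(C_1/(\eps/2)^{3/2})\cdot\area(\Gamma)/\area(K^*)$, and indeed $|R_{ab}|=96 C_1 C_2/(\eps/2)^{3/2}\ge 48 C_1/(\eps/2)^{3/2}\ge 4(C_1/(\eps/2)^{3/2})\cdot 12 C_2$, so with probability $\ge 2/3$ we get $\area(CH(R_{ab}\cap K^*))\ge (1-\eps/2)\area(K^*)$. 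Separately, I apply Lemma~\ref{le:sampleinside} (with $C=3$) to $S_{ab}$ and the top slice $K^*_{\ge 4/5}$ of $K^*$: since $|S_{ab}|=288 C_2/\eps\ge 288 C_2$ and $\area(\Gamma)/\area(K^*_{\ge 4/5})=12 C_2\cdot 5=60 C_2$, while $4\cdot 3\cdot 60 C_2=720 C_2$... actually one needs $288 C_2/\eps\ge 720 C_2$, which holds since $\eps<1$ gives $288/\eps>288>720$? No — so instead I would use that $|S_{ab}|=288 C_2/\eps$ and since $\eps<1$, $288 C_2/\eps>288 C_2$; to beat $720 C_2$ one wants $\eps<288/720=2/5$. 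For general $\eps$ one should double-check the constant or simply absorb it; the safe route is to note $|S_{ab}|\cdot\area(K^*_{\ge 4/5})/\area(\Gamma)=(288 C_2/\eps)\cdot(1/(60 C_2))=24/(5\eps)\ge 24/5>12$, so with probability $\ge 5/6$ the set $S_{ab}$ contains a point $s$ in the top slice of $K^*$, and symmetrically $S_{ab}$ meets the bottom slice.

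The key geometric observation is then: all points of $R_{ab}\cap K^*$ and all points of $S_{ab}\cap K^*$ lie in $K^*\subseteq P$, so every pair among them is mutually visible and hence an edge of $G_{ab}=G(P,(R_{ab}\cup S_{ab})\cap P)$. Therefore the vertex set of $CH((R_{ab}\cup S_{ab})\cap K^*)$ forms a convex clique in $G_{ab}$ of area at least $\area(CH(R_{ab}\cap K^*))\ge(1-\eps/2)A^*(P)$. Let $s$ be the highest point of $S_{ab}\cap K^*$ (which exists by the paragraph above); then $s$ is a vertex of $V(G_{ab})$ from $S_{ab}$, and $\varphi(G_{ab},s)$ is at least the area of the largest convex clique with highest point $s$. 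The only remaining gap is that $s$ need not be the highest vertex of $CH((R_{ab}\cup S_{ab})\cap K^*)$ — but since $s$ lies in the top slice $K^*_{\ge 4/5}$, cutting $CH((R_{ab}\cup S_{ab})\cap K^*)$ at height $y(s)$ loses at most the portion of $K^*$ above height $y_{4/5}(K^*)$, which is $\tfrac15\area(K^*)$; this is too much. The fix (and I expect this to be the main obstacle, requiring care in the constants): take the top slice threshold to be $y_{1-\eps'}(K^*)$ for a suitably small $\eps'=\Theta(\eps)$ rather than $y_{4/5}$, re-running Lemma~\ref{le:sampleinside} with the slice $K^*_{\ge 1-\eps'}$ of area $\eps'\area(K^*)$ — this inflates the required $|S_{ab}|$ by a $1/\eps'=\Theta(1/\eps)$ factor, consistent with $|S_{ab}|=\Theta(1/\eps)$. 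Then the convex clique $\{s\}\cup\big((R_{ab}\cup S_{ab})\cap K^*\cap\{y\le y(s)\}\big)$ has $s$ as highest vertex and area at least $(1-\eps/2)\area(K^*)-\eps'\area(K^*)\ge(1-\eps)A^*(P)$ for $\eps'=\eps/2$, so $\area(\varphi(G_{ab},s))\ge(1-\eps)A^*(P)$ and $best$ is updated to such a polygon. Finally I would collect the failure probabilities: (E1) fails with probability $\le 1/6$, (E2) fails with probability $\le 1/3$, the $R_{ab}$ event fails with probability $\le 1/3$, and the $S_{ab}$ top-slice event fails with probability $\le 1/6$; a union bound gives total failure probability $\le 1/6+1/3+1/3+1/6=1$, which is not good enough, so I would instead sharpen the individual bounds (e.g. boost the constants in $|R_{ab}|$, $|S_{ab}|$, $r$ so each bad event has probability $\le 3/16$, summing to $<3/4$) to conclude success probability at least $1/4$ in one iteration.
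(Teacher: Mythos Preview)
Your overall structure is right and matches the paper: you correctly isolate the events (graph small, $K^*$ caught in some $\Gamma(a,b,\cdot)$, the inner samples $R_{ab},S_{ab}$ behave well) and you correctly see that the delicate point is guaranteeing that the clique you build has its highest vertex in $S_{ab}$. You also correctly converge on using a thin top slice $K^*_{\ge 1-\eps'}$ of area $\Theta(\eps)\cdot\area(K^*)$ for this purpose. Two things, however, do not go through as written.

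\emph{The ``cut at $y(s)$'' step is not valid.} You deduce $\area(CH(R_{ab}\cap K^*))\ge(1-\eps/2)\area(K^*)$ and then claim that restricting to the points at height $\le y(s)$ loses at most $\area(K^*\cap\{y>y(s)\})\le\eps'\area(K^*)$. That bound holds for the \emph{region} $CH(R_{ab}\cap K^*)\cap\{y\le y(s)\}$, but the clique you form is $\{s\}\cup\big(R_{ab}\cap K^*\cap\{y\le y(s)\}\big)$, whose convex hull can be much smaller (throwing away extreme points can shrink a hull arbitrarily, not just by the area of the strip they occupy). The paper avoids this entirely by applying Lemma~\ref{le:sampleinside2} not to $K^*$ but directly to the \emph{bottom} slice $K^*_{1-\eps/2}:=\{p\in K^*: y(p)\le y_{1-\eps/2}(K^*)\}$, obtaining $\area\big(CH(R_{ab}\cap K^*_{1-\eps/2})\big)\ge(1-\eps/2)\area(K^*_{1-\eps/2})=(1-\eps/2)^2A^*(P)\ge(1-\eps)A^*(P)$. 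Since every such sample point already lies at height $\le y_{1-\eps/2}(K^*)\le y(s)$, adjoining $s$ gives a convex clique with $s$ highest and no cutting is needed.

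\emph{The probability accounting.} Your flat union bound $1/6+1/3+1/3+1/6=1$ is too crude, and you are not free to ``boost the constants'': they are fixed in the algorithm. The constants \emph{as stated} already suffice if you use conditional probability instead. Group the outer events: $\Pr[\E_G\cap\E_{K^*}]\ge 5/6+2/3-1=1/2$. The inner samples $R_{ab},S_{ab}$ are drawn fresh after $a,b$ are determined, so conditionally on $\E_G\cap\E_{K^*}$ the two inner good events (for the particular edge $a_0b_0$) have probability $\ge 2/3$ and $\ge 5/6$, hence jointly $\ge 1/2$. Multiplying, one iteration succeeds with probability at least $1/4$.
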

\begin{proof}
	Let $K^*$ be a convex polygon of largest area contained in $P$.
    Therefore $\area(K^*)=A^*(P)$.     
    Consider one iteration of the repeat-loop. 
    Suppose $G(P,R)$ passes the test on line 6. Then the following two
conditions are sufficient for a successful iteration:
    $R$ contains two visible points $a$ and $b$ such that the parallelogram $\Gamma(a,b,C_2\cdot A(P))$ (used in lines 9--14) contains $K^*$, and $S_{ab}$ contains a point $s$ such that $\area(\varphi(G_{ab},s))$ (computed in line 13) is a $(1-\eps)$-approximation to $\area(K^*)$. This motivates the definition of the following events:
    %Essentially, there are
    %three things that can go wrong.
    %We define the following events:
    \begin{align*}
       \E_{K^*}:&~~~ \text{for some edge $ab$ of $G(P,R)$, 
       		$K^*$ is contained in $\Gamma(a,b,C_2\cdot A(P))$},\\
       \E_{G}:&~~~ |E(G(P,R))|~\le~ C_3 \cdot n,\\
       \E_{\Gamma}:&~~~\text{for some edge $ab$ of $G(P,R)$, there is $s\in S_{ab}$   such that}\\ &~~~~~~~~~~~~~~~~~ \area(\varphi(G_{ab},s))~\ge~(1-\eps)\cdot A^*(P).   
    \end{align*}
 	
    Since
    \[
    	|R| ~=~ \frac{60}{A(P)}
        ~\ge~ \frac{60}{\area(K^*)}
    \]
    and $A^*(P)\le C_2\cdot A(P)$,
    Lemma~\ref{le:samplegamma} implies that
    \[
    	\Pr\left[\E_{K^*}\right]\ge \tfrac23.
    \]
    By Lemma~\ref{le:sizeG} we have 
    \[
    	\Pr[\E_{G}] \ge \tfrac 56
    \] 
    and therefore, since $\Pr\left[ A \cap B \right]\geq \Pr\left[ A\right]+\Pr\left[ B\right]-1$,
    \begin{equation}
    \label{eq:pr1}
    	\Pr\left[\E_{K^*} \text{ and } \E_G \right]\ge \tfrac12.
    \end{equation}
    
    For the rest of the proof, we assume that $\E_{K^*}$ and $\E_G$ hold.
    Let $a_0 b_0$ be the edge of $G(P,R)$ such that 
    $\Gamma_0\mathrel{\mathop:}=\Gamma(a_0,b_0,C_2\cdot A(P))$ contains $K^*$.
    The algorithm executes the code in lines 9--14 for $ab=a_0b_0$.
	Let $K^*_{\eps/2}$ be the portion of $K^*$ above $y=y_{1-\eps/2}(K^*)$
    and let $K^*_{1-\eps/2}=K^*\setminus K^*_{\eps/2}$.
    It holds that 
    \[
    	\area(K^*_{\eps/2})=(\eps/2)\cdot \area(K^*) ~\text{ and }~ 
        \area(K^*_{1-\eps/2})=(1-\eps/2) \cdot \area(K^*).
    \]
	The bound
    \[
    	|S_{a_0b_0}| ~=~ \frac{288\cdot C_2}{\eps}
        ~=~ 4\cdot 3 \cdot 
        	\frac{12\cdot C_2 \cdot A^*(P)}{(\eps/2) \cdot A^*(P)}
        ~\ge~ 4\cdot 3 \cdot
        	\frac{\area(\Gamma_0)}{\area(K^*_{\eps/2})}
    \]
    and Lemma~\ref{le:sampleinside} (with $P=\Gamma_0$ and $K=K^*_{\eps/2}$)
    imply that
	\begin{equation}\label{eq:prob1}
    	\Pr\left[ S_{a_0b_0}\cap K^*_{\eps/2} \not= \emptyset
        		\mid \E_{K^*} \text{ and } \E_G \right] 
            ~\ge~ \frac 56\, .
    \end{equation}
    The bound
    \[
    	|R_{a_0b_0}| ~=~ \frac{96\cdot C_1\cdot C_2}{(\eps/2)^{3/2}}
        ~=~ 4\cdot\frac{C_1}{(\eps/2)^{3/2}} \cdot 
        	\frac{12\cdot C_2 \cdot A^*(P)}{A^*(P)/2}
        ~\ge~ 4\cdot \frac{C_1}{(\eps/2)^{3/2}} \cdot
        	\frac{\area(\Gamma_0)}{\area(K^*_{1-\eps/2})}
    \]
    and Lemma~\ref{le:sampleinside2} (with $P=\Gamma_0$ and $K=K^*_{1-\eps/2}$)
    imply that
	\[
    	\Pr\left[ \area(\conv(R_{a_0b_0}\cap K^*_{1-\eps/2}))\ge (1-\eps/2)\cdot \area(K^*_{1-\eps/2}) 	
        	\mid \E_{K^*} \text{ and } \E_G \right] 
            ~\ge~ \frac 23\, .
    \]
	Noting that
    \begin{align*}
    	(1-\eps/2)\cdot \area(K^*_{1-\eps/2})
        	~&=~ (1-\eps/2)\cdot (1-\eps/2)\cdot A^*(P)\\
            &>~ (1-\eps)\cdot A^*(P),
    \end{align*}
    we have
    \begin{equation}\label{eq:prob2}
    	\Pr\left[ \area(\conv(R_{a_0b_0}\cap K^*_{1-\eps/2}))\ge (1-\eps)\cdot A^*(P)	
        	\mid \E_{K^*} \text{ and } \E_G \right] 
            ~\ge~ \frac 23\, .
    \end{equation}
    Joining \eqref{eq:prob1} and \eqref{eq:prob2} we obtain that,
    with probability at least $1/2$, it holds
    \[
    	\area(\conv(R_{a_0b_0}\cap K^*_{1-\eps/2}))\ge (1-\eps)\cdot A^*(P) ~\text{ and }~
        S_{a_0b_0}\cap K^*_{\eps/2} \not= \emptyset.
    \]
    If these two events occur and $s$ is a point of $S_{a_0b_0}\cap K^*_{\eps/2}$, then
    \begin{align*}
    	\area(\varphi(G_{a_0b_0},s)) ~&\ge~
        \area( \conv((K^*_{1-\eps/2}\cap R_{a_0b_0})\cup \{ s \})) \\
        &\ge~ \area( \conv(K^*_{1-\eps/2}\cap R_{a_0b_0}))\\
        &\ge~ (1-\eps)\cdot A^*(P).
    \end{align*}
    We conclude that
	\[
    	\Pr\left[ \E_{\Gamma} \mid \E_{K^*} \text{ and } \E_G \right] 
            ~\ge~ \tfrac 12\, 
    \]
	and using \eqref{eq:pr1} obtain
	\[
    	\Pr\left[ \E_{K^*} \text{ and } \E_G \text{ and } \E_{\Gamma} \right] 
            ~=~ \Pr\left[ \E_{\Gamma} \mid \E_{K^*} \text{ and } \E_G \right]
            	\cdot \Pr\left[ \E_{K^*} \text{ and } \E_G \right] 
            ~\ge~ \frac 12\cdot \frac 12 
            ~=~ \frac 14 \, .
    \]
	When $\E_{K^*}$, $\E_{G}$ and $\E_{\Gamma}$ occur,
    the test in line 6 is satisfied and 
    in one of the iterations of the loop in lines 13--14 we will obtain
    a $(1-\eps)$-approximation to $A^*(P)$.
\end{proof}

\begin{theorem}
\label{thm:main-result}
    Let $P$ be a polygon with $n$ vertices,
    let $\eps$ and $\delta$ be parameters with $0<\eps< 1$ and $0<\delta <1$.
    In time $O\left( n (\log^2 n + (1/\eps^3) \log n + 1/\eps^4) \log(1/\delta) \right)$ 
    we can find a convex polygon contained in $P$ that,
    with probability at least $1-\delta$,
    has area at least $(1-\eps)\cdot A^*(P)$.
\end{theorem}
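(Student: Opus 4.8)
The plan is to assemble Theorem~\ref{thm:main-result} from the two lemmas just proved, using standard amplification. Lemma~\ref{le:time} already gives the time bound for a single run of {\sc LargePotato} with the stated adaptations, namely $O((n/\eps^4)\log^2 n \log(1/\delta))$, so the only remaining work is the probability analysis. Lemma~\ref{le:correctness} tells us that in a single iteration of the repeat-loop (lines 5--14) the algorithm discovers a convex polygon of area at least $(1-\eps)A^*(P)$ with probability at least $1/4$. The algorithm runs the repeat-loop $3\log_2(1/\delta)$ times, and the only way it can fail to output a $(1-\eps)$-approximation is if \emph{every} one of these iterations fails.

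The key step is therefore the independence argument and the geometric-series bound on the failure probability. First I would observe that the random choices in distinct iterations of the repeat-loop are independent: each iteration draws a fresh sample $R$ of $r$ points, and within it fresh samples $R_{ab},S_{ab}$, all from independent random bits. Hence the events ``iteration $i$ fails'' are mutually independent, each with probability at most $3/4$ by Lemma~\ref{le:correctness}. So the probability that all $3\log_2(1/\delta)$ iterations fail is at most $(3/4)^{3\log_2(1/\delta)}$. I would then bound this: since $(3/4)^3 < 1/2$, we get $(3/4)^{3\log_2(1/\delta)} = \left((3/4)^3\right)^{\log_2(1/\delta)} < (1/2)^{\log_2(1/\delta)} = \delta$. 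Thus with probability at least $1-\delta$ at least one iteration succeeds, and since the variable $best$ only ever improves and the algorithm returns $CH(best)$, the returned polygon is convex, contained in $P$ (every candidate clique lies in $P$ because $G_{ab}$ is a visibility graph in $P$), and has area at least $(1-\eps)A^*(P)$.

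A couple of loose ends need to be tied. One is the degenerate case: the statement of Lemma~\ref{le:correctness} presumes $1/n < \eps$, which is exactly the running assumption recorded in the introduction; when $\eps \le 1/n$ we instead invoke the exact $O(n^7) = O(n/\eps^6) = O(n/\eps^4 \cdot n^2)$ — actually here one should just fall back on the assumption $\eps > 1/n$ stated in the paper, so no separate argument is needed beyond a pointer. A second point is that the time bound in Lemma~\ref{le:time} is stated with the phrase ``can be adapted to use''; I would simply note that Theorem~\ref{thm:main-result} inherits exactly that adaptation, so the running time is $O((n/\eps^4)\log^2 n \log(1/\delta))$ as claimed.

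I do not expect any genuine obstacle here: all the substantive work — the visibility-probability bound of Theorem~\ref{thm:prob2}, the sampling lemmas of Section~\ref{sec:convex}, the parallelogram containment, the dynamic program of Lemma~\ref{le:phi}, and the per-iteration correctness and timing — has already been done. The proof of Theorem~\ref{thm:main-result} is the short ``wrapper'' that combines Lemma~\ref{le:time} and Lemma~\ref{le:correctness} via independent repetition and the elementary inequality $(3/4)^3 < 1/2$. The one thing to be careful about is to make the independence of repeat-loop iterations explicit, since the per-iteration bound $1/4$ is only useful if the iterations genuinely do not share randomness.
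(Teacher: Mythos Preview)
Your proposal is correct and follows essentially the same approach as the paper's own proof: invoke Lemma~\ref{le:time} for the running time, then use Lemma~\ref{le:correctness} together with the independence of the $3\log_2(1/\delta)$ repeat-loop iterations and the inequality $(3/4)^{3\log_2(1/\delta)} < (1/2)^{\log_2(1/\delta)} = \delta$ to bound the failure probability. The only differences are that you make the independence of iterations and the monotonicity of $best$ explicit, whereas the paper leaves these implicit.
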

\begin{proof}
    We consider the output $K$ given by {\sc LargePotato}($P,\eps,\delta$).
    By Lemma~\ref{le:time}, we can assume that the output
    is computed in time $O\left( n (\log^2 n + (1/\eps^3) \log n + 1/\eps^4) \log(1/\delta) \right)$.
    
    The polygon $K$ returned by {\sc LargePotato}($P,\eps,\delta$)
    is always a convex polygon contained in $P$. 
    We have $\area(K)< (1-\eps)\cdot A^*(P)$ if and only
    if all iterations of the repeat-loop (lines 5--14)
    fail to find such a $(1-\eps)$-approximation. Since
    each such iteration fails with probability at most $3/4$
    due to Lemma~\ref{le:correctness}, 
    and there are $3\log_2(1/\delta)$ iterations, we have
\belowdisplayskip=-8pt
    \[
    	\Pr[\area(K)< (1-\eps)\cdot A^*(P)] 
        ~\le~ \left( \frac 34\right)^{3\log_2(1/\delta)}
        ~<~ 
        \left( \frac{1}{2}\right)^{\log_2(1/\delta)} ~=~ \delta.
    \]
\qedhere
\end{proof}

We observe that, if we perform only four iterations of the repeat-loop, we obtain the result stated in the abstract. It is also interesting to note that, when $\eps$ is constant, 
then the running time of the algorithm is $O(n \log^2 n \log(1/\delta) )$.

%%%%%%%%%%%%%%%%%%%%%%%%%%%%%%%%%%%%%%%%%%%%%%%%%%%%%%%%%%%%%%%%%%%%%%%%%%%%%%%%%%%%%%%%%%%%%%%%%%%%%%%%%%%%%%%%%%%%%%%%%%%%%%%%%%%%%%%%%%%%%%%%
\section{Convex body of maximum perimeter}
\label{sec:perimeter}

In this section we present an adaptation of the previous algorithm in order to maximize the perimeter. 
Recall that for a convex body $K$ in the plane, 
we denote its perimeter by $\per(K)$, and we have defined
\begin{align*}
    L^*(P) ~=~ \max \{ \per(K) \mid K\subset P,~ K \text{ convex}\}.
\end{align*}
Before presenting the actual algorithm, we first introduce a few tools.

%%%%%%%%%%%%%%%%%%%%%%%%%%%%%%%%%%%%%%%%%%%%%%%%%%%%%%%%%%%%%%%%%%%%%%%%%%%%%%%%%%%%%%%%%%%%%%%%%%%%%%%%%%%%%%%%%%%%%%%%%%%%%%%%%%%%%%%%%%%%%%%%%%%%%%%%%%%
\subsection{Perimeter of the convex hull of random samples}
Let $K$ be a convex body in the plane and let $R$ be a random sample
of $n$ points inside $K$. How well does $\per(\conv(R))$ approximate $\per(K)$?
There has been quite some research on this problem, often on the 
high-dimensional generalizations called intrinsic volumes. 
See~\cite{Schneider2004,Schneider2008} for an overview of the known results on this problem.
However, all the results we found use constants that depend on $K$.
Since in our application the target convex body $K$ is unknown, we are
interested in bounds using universal constants, independent of $K$.

\begin{theorem}
\label{thm:randomperimeter}
	There is some universal constant $C_4\ge 1$ such that the following holds.
	For each convex body $K$ in the plane, if 
	$K_m$ denotes the convex hull of $m$ points chosen uniformly
	at random inside $K$, then
	$\per(K)-\EE [\per(K_m)] \le \tfrac{C_4}{\sqrt{m}}\cdot \per (K)$.
\end{theorem}
\begin{proof}	
	Let us first provide some notation used only in this proof.
	In the notation we drop the dependency on $K$.
    Let $\U$ be the set of unit vectors in the plane.    
	For each $u\in \U$ and each $t\in \RR_{\ge 0}$ we define the following values;
	see Figure~\ref{fig:randomperimeter}.
	\begin{align*}
		h(u) ~&\mathrel{\mathop:}=~ \max \{\scalar{p,u} \mid p\in K \},\\
		\dwidth(u) ~&\mathrel{\mathop:}=~ h(u)+h(-u),\\
		S(u,t) ~&\mathrel{\mathop:}=~ \{ p \in \RR^2 \mid h(u)-t\le\scalar{p,u} \le h(u)\}, \\
		\ell(u,t) ~&\mathrel{\mathop:}=~ \{ p \in \RR^2 \mid h(u)-t = \scalar{p,u}\}, \\
		v(u,t) ~&\mathrel{\mathop:}=~ \area( K\cap S(u,t)).
	\end{align*}
	Note that $\dwidth(u)$ is the so-called \emph{directional width} of $K$ in direction $u$:
	the length of the orthogonal projection of $K$ onto any line parallel to $u$.
	The line $\ell(u,t)$ is perpendicular to $u$ and $\ell(u,0)$ is tangent to $K$.
	Moreover, $S(u,t)$ is an infinite slab of width $t$ defined by $\ell(u,0)$ and $\ell(u,t)$.
	When $t>0$ the slab $S(u,t)$ intersects the interior of $K$. The value
	$v(u,t)$ tells the area of the portion of $K$ contained in $S(u,t)$.

	\begin{figure}
		\includegraphics[width=\textwidth]{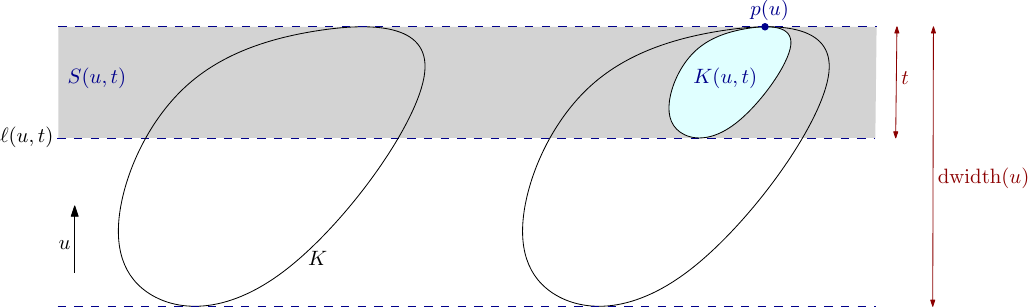}
		\caption{Notation in the proof of Theorem~\ref{thm:randomperimeter}.}
		\label{fig:randomperimeter}
	\end{figure}
	
	In the proof we are going to use the classical Crofton's formula that tells
	\begin{equation}
		\per(K) ~=~ \int_\U \dwidth(u) \, d\omega(u) 
				~=~ \int_\U \int_{-\infty}^{+\infty} 
						\mathds{1}_{K\cap \ell(u,t)\not= \emptyset} \, dt \, d\omega(u),
	\label{eq:Crofton}
	\end{equation}
	where $\omega$ is the uniform Lebesgue measure on $\U$ satisfying $\int_\U 1 \, d\omega(u)=1/2$.
	
	We adapt the approach of Schneider and Wieacker~\cite{S87,SW80},
	based on an observation of Efron~\cite{Efron65}.
	
	By scaling, we can assume that $K$ has unit area. 
	Consider a line $\ell(u,t)$ that intersects $K$. 
	The line $\ell(u,t)$ does not intersect $K_m$ if and only if all points of the random sample
	lie in the interior of $S(u,t)$ or all the points of the random sample lie in $K\setminus S(u,t)$.
	Since $K$ has area $1$ and $K\cap S(u,t)$ has area $v(u,t)$, this means that 
	\[
		\forall u\in \U \text{ and } t\in [0,\dwidth(u)]: ~~~ 
			\Pr[\ell(u,t)\cap K_m \not= \emptyset] ~=~ 1- v(u,t)^m - \bigl(1-v(u,t)\bigr)^m.
	\]
	
	Using Fubini's theorem we have 
	\begin{align*}
		\EE [\per(K_m)] ~&=~ \int_\U \int_0^{\dwidth(u)} \Pr[\ell(u,t)\cap K_m \not= \emptyset] 
			\, dt\, d\omega(u)\\
			&=~ \int_\U \int_0^{\dwidth(u)}  \Bigl[1- v(u,t)^m - \bigl(1-v(u,t)\bigr)^m \Bigr]
			\, dt\, d\omega(u)\\
			&=~ \int_\U \dwidth(u) \, d\omega(u) -
				\int_\U \int_0^{\dwidth(u)} \Bigl[ v(u,t)^m + \bigl(1-v(u,t)\bigr)^m \Bigr]
					\, dt\, d\omega(u)\\
			&=~ \per(K) - \int_\U \int_0^{\dwidth(u)} \Bigl[ v(u,t)^m + \bigl(1-v(u,t)\bigr)^m \Bigr]
					\, dt\, d\omega(u).
	\end{align*}
	Note that for each $u\in \U$ and each $t$ with $0\le t \le \dwidth(u)$ we have
	\[
		1-v(u,t)= v(-u,\dwidth(u)-t),
	\]
	and therefore, by applying the change of variables $w=-u$, $s=\dwidth(u)-t$ and renaming the new variables as $u$, $t$,
	\begin{align*}
		\int_\U \int_0^{\dwidth(u)} \bigl(1-v(u,t)\bigr)^m dt\, d\omega(u)
		~&=~ \int_\U \int_0^{\dwidth(u)} v(u,t)^m dt\, d\omega(u).
	\end{align*}
	Thus, rearranging terms we get
	\begin{align*}
		\per(K) - \EE[\per(K_m)] ~&=~ 
			2\, \int_\U \int_0^{\dwidth(u)} \bigl(1-v(u,t)\bigr)^m dt\, d\omega(u).
	\end{align*}
	
	Now we adapt the estimate of Schneider~\cite{S87} to make it independent of $K$.
	The right of Figure~\ref{fig:randomperimeter} may be helpful.
	For each $u\in \U$, let $p(u) \in K$ be a point maximizing $\scalar{p,u}$.
	For each $u\in \U$ and each $t\in \RR_{\ge 0}$,
	let $K(u,t)$ be a copy of $K$ scaled by $\frac{t}{\dwidth(u)}$ with center $p(u)$.
	Note that $K(u,t)$ is contained in $S(u,t)$ and, if $0\le t\le \dwidth(u)$, it is also contained in $K$. 
	Therefore, since $\area(K)=1$, we have
	\[
		\forall u\in \U \text{ and } t\in [0,\dwidth(u)]: ~~~ 
			v(u,t) \ge \area(K(u,t)) = \left(\frac{t}{\dwidth(u)}\right)^2.
	\]
	Thus we have the estimate 
	\begin{align*}
		\per(K) - \EE[\per(K_m)] ~&\le~ 
			2\, \int_\U \int_0^{\dwidth(u)} \left(
					1-\left(\frac{t}{\dwidth(u)}\right)^2\right)^m dt\, d\omega(u).
	\end{align*}
	For each $u\in \U$ we can use the change of variable $x=(t/\dwidth(u))^2$ and we obtain
	that
	\begin{align*}
		\per(K) - \EE[\per(K_m)] ~&\le~ 
			\int_\U \int_0^1 x^{-1/2} (1-x)^m \dwidth(u)\, dx\, d\omega(u)\\
			&=~ \int_\U \dwidth(u) \int_0^1 x^{-1/2} (1-x)^m\, dx \, d\omega(u).
	\end{align*}
	Using the standard formula 
	\[
		\int_0^1 x^{a-1} (1-x)^{b-1} \,dx ~=~ \frac{\Gamma(a)\Gamma(b)}{\Gamma(a+b)}
	\]
	for the beta function and the gamma function, and the known value $\Gamma(1/2)=\sqrt{\pi}$,
	we further derive
	\begin{align*}
		\per(K) - \EE[\per(K_m)] ~&\le~ 
			\int_\U \dwidth(u) \int_0^1 x^{-1/2} (1-x)^m\, dx \, d\omega(u)\\
			&=~ \int_\U \dwidth(u) \frac{\sqrt{\pi} \,\Gamma(m+1)}{\Gamma(m+3/2)}\, d\omega(u)\\		
			&=~ \frac{\sqrt{\pi} \,\Gamma(m+1)}{\Gamma(m+3/2)} \int_\U \dwidth(u)\, d\omega(u)\\		
			&=~ \frac{\sqrt{\pi} \,\Gamma(m+1)}{\Gamma(m+3/2)} \cdot \per(K).		
	\end{align*}
	Now we use that
	\[
		\lim_{m\rightarrow \infty} \, \frac{\Gamma(m+3/2)}{\Gamma(m+1)\, \sqrt{m}}=1
	\]
	to conclude that, for some constant $C_4\ge 1$,
	\[
		\per(K) - \EE[\per(K_m)] ~\le~ \frac{C_4}{\sqrt{m}}\,\per(K). \qedhere
	\]
\end{proof}

Note that the bound in this theorem is optimal. 
When $K$ is an equilateral triangle of unit area, to get a $(1-\eps)$-approximation of $\per(K)$,
we need to sample at least one point at distance at most $O(\eps)$ from each vertex, 
and these regions have area $\Theta(\eps^2)$.

The following lemma is the analogue to Lemma~\ref{le:sampleinside2} for the perimeter.

\begin{lemma}\label{le:sampleinside3}
	Let $K$ be a convex body contained in a polygon $P$, 
	let $R$ be a random sample of points inside $P$,
    and let $C_4$ be the constant in Theorem~\ref{thm:randomperimeter}.
    If 
    \[
    	|R|~\ge~ 4\cdot \left( 6 C_4/\eps\right)^2 \cdot \frac{\area(P)}{\area(K)}\,,
    \] 
    then with probability at least $2/3$ it holds that
    $\per(\conv(R\cap K))\ge (1-\eps)\per(K)$.
\end{lemma}
\begin{proof}
	We define the following events:
    \begin{align*}
       \E:&~~~ |R\cap K|~\ge~ (6C_4/\eps)^{2},\\
       \F:&~~~ \per(\conv(R\cap K))~\ge~ (1-\eps)\cdot \per(K) .   
    \end{align*}
    For each event $\mathcal{A}$ we use $\overline{\mathcal{A}}$ 
    for its negation.
    Since $C_4\ge 1$, then $(6C_4/\eps)^{2}\ge 3$ and 
    Lemma~\ref{le:sampleinside} implies
    \[
    	\Pr\left[ \E \right] ~\ge~ \frac 56\, .
    \]
    Assuming the event $\E$, that is, $|R\cap K|~\ge~ (6C_4/\eps)^{2}$,
	it follows from Markov's inequality and Theorem~\ref{thm:randomperimeter} that
	\begin{align*}
		\Pr\left[\per(K)-\per(\conv(R\cap K)) \ge \eps\cdot \per(K)\right] ~&\le~ 
			\frac{1}{\eps\cdot \per(K)}\cdot\EE\left[ \per(K)-\per(\conv(R\cap K)) \right]\\
			&\le~ \frac{1}{\eps\cdot \per(K)}\cdot\frac{C_4}{\sqrt{|R\cap K|}}\cdot\per(K)\\
			&\le~ \frac{C_4}{\eps\cdot \per(K) \cdot(6C_4/\eps)} \cdot \per(K)\\
			&=~ \frac 16\, .
	\end{align*}
	This means that 
    \[
    	\Pr\left[ \overline{\F} \mid \E \right] ~\le~ \frac 16\, 
    \]
    and therefore 
    \[
		\Pr\left[ \overline{\F}\right] ~ \le~ \Pr\left[ \overline{\F} \mid \E \right] + 
        		\Pr\left[ \overline{\E}\right] ~\le~ \frac 16 +\frac 16 ~=~ \frac 13\, .
	\qedhere
    \]
\end{proof}

\subsection{Bounds depending on the fatness}
We recall a result about approximation of convex bodies in the plane by rectangles: 

\begin{lemma}[Schwarzkopf et al.~\cite{Schwarzkopf-rect}]
Given a convex body $K$ in the plane, there exist two similar and parallel rectangles $\piin{K}$ and $\piout{K}$ such that $\piin{K}\subseteq K \subseteq \piout{K}$ and the sides of $\piout{K}$ are at most twice as long as the sides of $\piin{K}$.
\end{lemma}

Note that the statement does \emph{not} guarantee that 
$\piin{K}$ and $\piout{K}$ have a common center.
Given a convex body $K$, we denote by $d_1(K)$ and $d_2(K)$ the lengths of the sides of $\piin{K}$, with $d_1(K) \geq d_2(K)$. Although the rectangles $\piin{K}$ and $\piout{K}$ are not necessarily unique, 
this does not affect our arguments.

We will run two algorithms and choose the best between both outputs.
One of the algorithms is a $(1-26\eps)$-approximation when the optimal solution $K^*$ 
satisfies $d_2(K^*)/d_1(K^*)\le \eps$, while the other covers the case $d_2(K^*)/d_1(K^*)\ge \eps$.
We now develop bounds for both cases.

\begin{lemma} \label{le:stick}
Let $K^*$ be a convex body contained in $P$ such that $\per(K^*)= L^*(P)$.
Let $\ell$ be the length of a longest line segment contained in $P$.
If $d_2(K^*)/d_1(K^*)\le \eps \le 2/5$,
then $L^*(P) \leq 2\cdot \ell \cdot (1+25\cdot \eps)$.
\end{lemma}

\begin{proof}
	To simplify the notation, 
	in this proof we set $d_1:=d_1(K^*)$ and 
	$d_2:=d_2(K^*)$. Without loss of generality, 
	we assume that the longer side of $\piin{K^*}$ is horizontal, 
	and the shorter one is vertical.

	Let $s$ be a longest line segment contained in $K^*$, 
	let $\ell_{K^*}$ be its length, and let $a$ and $b$ be its two endpoints. 
	Clearly, $\ell_{K^*}\leq \ell$. Since $\piin{K^*}\subset K^*$, 
	we have $\ell_{K^*}\geq d_1$. 

	We first observe that $s$ is not vertical. Indeed, in this case the containment
	$K^*\subset \piout{K^*}$ implies $\ell_{K^*}\le 2\cdot d_2$. 
	Thus, we would have $d_1\le 2\cdot d_2$, which contradicts 
	the assumption that $d_2/d_1\le \eps\le 2/5$.

	Without loss of generality, we assume that $s$ has non-negative slope $\alpha$. 
	Since $s$ is a longest line segment in $K^*$, 
	$K^*$ is contained in the infinite region of the plane bounded 
	by the lines through $a$ and $b$ perpendicular to $s$. 
	We denote by $\Psi$ the parallelogram resulting from 
	the intersection of this region and the region of the plane bounded 
	by the lines supporting the horizontal edges of $\piout{K^*}$.
	See Figure~\ref{fig:skinny-polyg}. 
	The horizontal sides of $\Psi$ have length $\ell_{K^*}/\cos \alpha$, 
	while the other sides have length at most $2\cdot d_2/\cos \alpha$. 
	It is well-known, and an easy consequence of Crofton's formula 
    (see equation~\eqref{eq:Crofton} in Theorem~\ref{thm:randomperimeter}), 
    that if a convex body $K_1$ is contained in a convex body $K_2$, 
	then $\per(K_1)\leq \per(K_2)$. 
	Since $K^* \subseteq\Psi$, it is enough to prove that 
	$\per(\Psi)\leq 2\cdot \ell \cdot (1+25\cdot \eps)$.

	\begin{figure}
		\centering
		\includegraphics[]{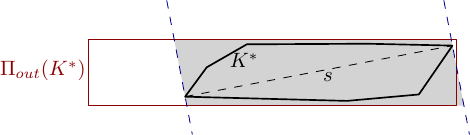}
		\caption{Situation in the proof of Lemma~\ref{le:stick}. In gray, the parallelogram $\Psi$.}
		\label{fig:skinny-polyg}
	\end{figure}

	Since $s$ is contained in $\piout{K^*}$, 
	the maximum slope is attained when one of the endpoints of $s$ 
	lies in the lower side of $\piout{K^*}$, 
	and the other endpoint in the upper side. 
	Therefore, 
	\[
		\sin \alpha ~\le~ \frac{2\cdot d_2}{\ell_{K^*}} ~\le~
			\frac{2\cdot \eps \cdot d_1}{d_1} ~=~ 2\cdot \eps\,.
	\]
	Since $\eps< 1/2$,
	\[
	\cos \alpha ~\ge~ \sqrt{1-4\cdot \eps^2} ~>~ 1-2\cdot \eps\,.
	\]
	Thus, we have
	\begin{align*}
			\per(\Psi) ~&\le~ 
			\frac{2\cdot \ell_{K^*}}{\cos \alpha}+\frac{4\cdot d_2}{\cos \alpha}\\
			&\le~ \frac{1}{\cos \alpha}\cdot (2\cdot \ell+4\cdot \eps \cdot d_1)\\
			&<~ \frac{1}{1-2\cdot \eps}\cdot (2\cdot \ell+4\cdot \eps \cdot \ell)\\
			&=~
			\frac{2\cdot \ell \cdot (1+2\cdot \eps)}{1-2\cdot \eps}\\
			&<~ 2\cdot \ell \cdot (1+25\cdot \eps)\, ,
		\end{align*}
	   where the last inequality holds because $\eps \le 2/5$.
\end{proof}

\begin{lemma}
\label{le:fat}
Let $K^*$ be a convex body contained in $P$ such that $\per(K^*)= L^*(P)$.
If $d_2(K^*)/d_1(K^*)\ge \eps$, then $\area(K^*)\geq \frac{\eps}{16}\cdot A^*(P)$. 
\end{lemma}

\begin{proof}
	To simplify the notation, in this proof we set $d_1:=d_1(K^*)$ and $d_2:=d_2(K^*)$.
	Let $K'$ be a convex shape contained in $P$ with $\area(K')=A^*(P)$. 
	We further define $d'_1:=d_1(K')$ and $d'_2:=d_2(K')$. 
	We have the following obvious relations:
    \begin{align*}
    	\per(K^*) ~&\le~ \per(\piout{K^*} ~\le~ 4\cdot (d_1+d_2)\, ,\\
        \per(K') ~&\ge~ \per(\piin{K'}) ~=~ 2\cdot (d'_1+d'_2)\, , \\
        \area(K^*) ~&\ge~ \area(\piin{K^*}) ~=~   d_1\cdot d_2\, ,\\
        \area(K') ~&\le~ \area(\piout{K'}) ~\le~  4 \cdot d'_1\cdot d'_2\, .
    \end{align*}
	Combining the first two inequalities with $\per(K')\le \per(K^*)$, we obtain 
	\begin{equation}
			d'_1+d'_2\le  2\cdot (d_1+d_2)\, .
	\label{eq:perim1}
	\end{equation}
	
	If $2\cdot d_1\ge d'_1$, then
	\begin{align*}
    	\area(K^*) ~&\ge~ d_1\cdot d_2 ~\ge~ d_1\cdot \eps \cdot d_1
         ~\ge~ (\eps/4)\cdot (d'_1)^2 \\ ~&\ge~ (\eps/4)\cdot d'_1 \cdot d'_2 ~\ge~ 
		(\eps/16)\cdot \area(K')\\
		&=~ (\eps/16)\cdot A^*(P)\,.
    \end{align*}
    
	Let us now consider the case $2\cdot d_1< d'_1$. Since the inequality \eqref{eq:perim1} implies
	\[
		2\cdot (d_1+d_2)~\ge~ d'_1+d'_2 ~>~ 2\cdot d_1 +d'_2\,
	\]
	we obtain $2\cdot d_2> d'_2$. 
	Adding the inequalities
	$d'_1> 2\cdot d_1$ and $2\cdot d_2> d'_2$
	we get that 
	\begin{equation}
			d'_1-d'_2> 2\cdot (d_1-d_2)\, .
	\label{eq:perim2}
	\end{equation}
	Combining \eqref{eq:perim1} and \eqref{eq:perim2} we have
	\begin{align*}
			d_1\cdot d_2 ~&=~ 1/4 \cdot ((d_1+ d_2)^2 - (d_1- d_2)^2)\\
			~&>~ 1/4 \cdot ( 1/4 \cdot (d'_1+ d'_2)^2 - 1/4 \cdot (d'_1- d'_2)^2)\\
			 ~&=~ (1/4)\cdot d'_1\cdot d'_2 \, .
		\end{align*}
	Thus, also in this case we get
	\begin{align*}
			\area(K^*) ~&\ge~ d_1\cdot d_2 ~>~  (1/4) \cdot d'_1\cdot d'_2
		  ~\ge~  (\eps/16)\cdot 4\cdot d'_1 \cdot d'_2 \\~&\ge~  (\eps/16)\cdot \area(K')
			~=~ (\eps/16)\cdot A^*(P) \, .
	\qedhere
	\end{align*}
\end{proof} 

%%%%%%%%%%%%%%%%%%%%%%%%%%%%%%%%%%%%%%%%
\subsection{Algorithm}

As already mentioned, we run two algorithms to find a $(1-\eps)$-approximation of $L^*(P)$. 
In fact, to keep the computations slightly simpler, we will provide for a $(1-26 \eps)$-approximation of $L^*(P)$. 

Let $K^*$ be a convex body inside $P$ with $\per(K^*)=L^*(P)$.
The first algorithm finds a $(1-26\eps)$-approximation of the value $L^*(P)$ 
when $d_2(K^*)/d_1(K^*) \leq \eps$, 
while the second algorithm returns a $(1-\eps)$-approximation of the value $L^*(P)$
when $d_2(K^*)/d_1(K^*) \ge \eps$. Since both algorithms compute a convex polygon contained in $P$, taking
the best of the two solutions we obtain a $(1-26\eps)$-approximation in any case.
We can assume that $\eps\le 2/5$, as otherwise we can just take $\eps=2/5$.

Consider first the case $d_2(K^*)/d_1(K^*) \leq \eps$. 
This means that the optimal solution $K^*$ is ``skinny".
Let $\ell$ be the length of a longest segment contained in $P$.
Let $\bar{s}$ be a line segment contained in $P$ of length at 
least $(1-\eps)\cdot \ell$. Lemma~\ref{le:stick} implies that
\[
	\per(\bar{s}) ~\ge~
	2\cdot (1-\eps)\cdot \ell ~\ge~ 
	L^*(P)\cdot \frac{1-\eps}{1+25\cdot \eps}
	~\ge~ L^*(P)\cdot (1-26\cdot \eps)  \,.
\]
Hall-Holt et al.~\cite{hkms-06} show how to compute such a segment $\bar{s}$ in $O((n/\eps^4) \log^2 n)$ time. We conclude that, whenever $d_2(K^*)/d_1(K^*) \leq \eps$, we can obtain a $(1-26\cdot \eps)$-approximation to $L^*(P)$ in $O((n/\eps^4) \log^2 n)$ time.

Consider now the case $d_2(K^*)/d_1(K^*) \geq \eps$.
This means that the optimal solution $K^*$ is ``slightly fat".
By Lemma~\ref{le:fat} we have
\[
	\frac{\eps}{16}\cdot A^*(P) ~\le~ \area(K^*) ~\le~ A^*(P).
\]
Let $A(P)$ be the approximation computed in 
the algorithm {\sc LargePotato}, line 1 of Figure~\ref{fi:algo}.
We then know that
\[
	\frac{\eps}{16}\cdot A(P) ~\le~ \area(K^*) ~\le~ C_2\cdot A(P).
\]
We divide the interval $[\frac{\eps}{16}\cdot A(P), C_2\cdot A(P)]$ into the following $O(\log 1/\eps)$ subintervals:
\[
	I_i\mathrel{\mathop:}=[C_2\cdot A(P)/2^{i+1},\, C_2\cdot A(P)/2^i],~~~
	i\in \ZZ, ~ 0\le i \le \lceil\log_2 (16C_2/\eps)\rceil.
\]
For each integer $i$ we can apply a modification of the algorithm {\sc LargePotato}, as follows.

\begin{lemma}
\label{le:iteration}
A modification of the algorithm {\sc LargePotato} taking as an extra parameter an integer $i$ has the following properties.
It always takes time 
\[
O\left( n\left[ 2^i \log^2 n + (4^i/\eps^4) \log n + 4^i/\eps^6\right] \log(1/\delta)\right).\]
If $\area(K^*)$ is in $I_i$, then the algorithm
finds a convex polygon of perimeter at least $(1-\eps)L^*(P)$ with probability at least $1-\delta$.
If $\area(K^*)$ is not in $I_i$, then the algorithm returns a convex polygon inside $P$.
\end{lemma} 
\begin{proof}
As stated in the algorithm {\sc LargePotato}, we assume that $P$ has unit area.
To simplify the computation, set $A_i=C_2\cdot A(P)/2^{i+1}$, that is, the lower endpoint of the interval $I_i$.
The value $r$ in line 2 of the algorithm {\sc LargePotato} is set to $r= 60/A_i$. Since $A(P)=\Omega(1/n)$ we have $r=O(2^i n)$.

Consider one of the iterations of the repeat loop (lines 5--14).
A slight modification of the proof of Lemma~\ref{le:sizeG} gives that the
expected size of the visibility graph $G(P,R)$ is bounded by
\begin{align*}
    	\EE [ |E(G(P,R))|] ~&=~ \binom{r}{2} \cdot 
        	\Pr[\text{two random points are visible in $P$}]\\
            &\le~ \frac 12 \left(\frac{60}{A_i}\right)^2
            	\cdot 180\cdot  A^*(P)\\
            &\le~ 324000\cdot \left(\frac{1}{C_2\cdot A(P)/2^{i+1}}\right)^2
            	\cdot A^*(P) \\
            &\le~ 4^{i+1} \cdot 324000 \cdot n.
\end{align*}
Therefore, the condition in line~6 of {\sc LargePotato} becomes ``\textbf{if} $G(P,R)$ has at most $C' \cdot n$ edges \textbf{then}'',
where $C'=6\cdot 4^{i+1} \cdot 324000$. 
This condition is satisfied in each iteration of the repeat loop with probability at least $5/6$.
This condition can be checked using Theorem~\ref{thm:visibility} in 
\[
O(n+r \log r \log(rn)+ 4^i n)=
O(2^i n \log(2^i n)\log (2^i n^2) + 4^i n)= O(n [4^i + 2^i \log^2 n])
\]
time.

Under the assumption that $A_i\le \area(K^*)\le 2A_i$, 
we have $r\ge 60/\area(K^*)$ and 
Lemma~\ref{le:samplegamma} ensures that, with probability at least $2/3$, 
the body $K^*$ lies in some parallelogram $\Gamma(a_0,b_0,2A_i)$ 
for some edge $a_0b_0$ of the visibility graph $G(P,R)$.

It remains to discuss how to find a maximum-perimeter polygon contained in 
the parallelogram $\Gamma(a_0,b_0,2A_i)$ that contains $K^*$ (lines 9--14).
For the other parallelograms $\Gamma(a,b,2A_i)$, 
$ab\in E(G(P,R))$, we just need to make sure that we find some convex polygon contained in $P$.

Consider an edge $ab$ of the visibility graph $G(P,R)$ and note
that 
\[
	\area(\Gamma(a,b,2A_i))=12\cdot 2A_i=24\cdot A_i.
\]
We have 
\[
	4\cdot (6C_4/\eps)^2 \cdot \frac{\area(\Gamma(a,b,2A_i))}{\area(K^*)}
	~\le~ \frac{144 \cdot (C_4)^2}{\eps^2} 
	\cdot \frac{24\cdot A_i}{A_i} ~=~ 
	\frac{C_5}{\eps^2},
\]
for some constant $C_5$. 

Using Lemma~\ref{le:sampleinside3} we obtain the following: if we take a sample $R_{a_0b_0}$ of $C_5/\eps^2$ points inside $\Gamma(a_0,b_0,2A_i)$, then with probability at least $2/3$, we have $\per(\conv(R_{a_0b_0}\cap K^*))\ge (1-\eps)\cdot L^*(P)$.
Thus, we proceed, for each $ab\in E(G(P,R))$ as follows:
take a sample $R_{ab}$ of $C_5/\eps^2$ points inside $\Gamma(a,b,2A_i)$,
build the visibility graph $G_{ab}$ of $R_{ab}$, and find a convex clique
in $G_{ab}$ of largest perimeter. As discussed in Lemma~\ref{le:time}, the visibility graph can be built in $O(|R_{ab}|^2 \log n)= O((1/\eps)^4 \log n)$ time. For computing the convex clique of largest perimeter, we use the modification of Lemma~\ref{le:phi} mentioned thereafter, 
using each point of $R_{ab}$ as highest point. 
Unlike in the case of approximating the area, here we cannot afford to use an asymptotically smaller sample $S_{ab}$ for the highest points, for the following reason. Let $T$ be an equilateral triangle with a horizontal base at the bottom and a vertex $v$ on top. To approximate the perimeter of $T$ by a convex hull of a set of points inside $T$ with error at most $\varepsilon$, the highest point of the set must be in a region of area $\Omega(\varepsilon^2 \cdot \area(T))$ near $v$. Hence we would need at least $\Omega(1/\varepsilon^2)$ points in the sample $S_{ab}$.
Thus, we need $O(|R_{ab}| \cdot |R_{ab}|^2)= O(1/\eps^6)$ time to compute the convex clique of largest perimeter. We conclude that for each edge $ab$ of $G(P,R)$ we spend $O((1/\eps^4) \log n + 1/\eps^6)
% = O((1/\eps^6)  \log n)
$ time.

Since we make $|E(G(P,R))|=O(4^i n)$ iterations 
of the for loop (lines 9--14), and each iteration of the 
for loop (lines 9--14) takes $O((1/\eps^4) \log n + 1/\eps^6)$ time,
in the for loop of lines 8--14 we spend 
$O(n\cdot 4^i \cdot ((1/\eps^4) \log n + 1/\eps^6))$ time.
To make the test in line $6$ we spend 
$O(n [4^i + 2^i \log^2 n])$. It follows that in each
iteration of the repeat loop we spend
$O\left( n\left[ 2^i \log^2 n + (4^i/\eps^4) \log n + 4^i/\eps^6\right]\right)$ time.
Since the algorithm makes $O(\log (1/\delta))$ iterations of the repeat loop,
the claimed time bound follows.

Under the assumption that $\area(K^*)$ lies in the interval $I_i$, 
the graph $G(P,R)$ passes the test of line 6 with probability at least $5/6$,
one of the parallelograms $\Gamma(a_0,b_0,2A_i)$ 
contains $K^*$ with probability at least $2/3$,
and the sample $R_{a_0b_0}$ has the property that 
$\per(K^*\cap R_{a_0b_0})\ge (1-\eps)L^*(P)$ with probability at least $2/3$.
When all three events occur, the algorithm finds a $(1-\eps)$-approximation. 
As shown in the proof of Lemma~\ref{le:correctness}, with probability at least $1/4$, the three events occur simultaneously, and thus some iteration of the repeat loop 
is successful with probability at least $1-\delta$, 
as shown in the proof of Theorem~\ref{thm:main-result}.
We conclude that, when $\area(K^*)$ lies in the interval $I_i$,
the output of the algorithm is a $(1-\eps)$-approximation with probability at least $1-\delta$.

When $\area(K^*)$ does not lie in the interval $I_i$, we spend the same time and we return a convex polygon contained in $P$ (possibly degenerated to a single point) without any guarantee.
\end{proof}

\begin{lemma}
\label{le:alliterations}
When $d_2(K^*)/d_1(K^*) \geq \eps$, we can find a 
convex polygon of perimeter at least $(1-\eps)L^*(P)$ with probability at least $1-\delta$ in time 
\[
O\left( n \left[ (1/\eps) \log^2 n + (1/\eps^6) \log n + 1/\eps^8\right]  \log(1/\delta)\right).
\]
\end{lemma}
\begin{proof}
We use the algorithm of Lemma~\ref{le:iteration} for each interval $I_i$, 
where $i=0,1,\dots,\allowbreak \lceil\log_2 (16C_2/\eps)\rceil$, and return
the polygon with largest perimeter we get over all iterations.
The running time is
\[
	\sum_{i=0}^{\lceil \log_2 (16C_2/\eps)\rceil} O\left( n \left[2^i \log^2 n + \frac{4^i}{\eps^4} \log n + \frac{4^i}{\eps^6} \right] \log(1/\delta)\right).
\]
Using that $\sum_i 2^i = O(1/\eps)$ and $\sum_i 4^i = O(1/\eps^2)$,
this becomes
\[
%	O\left( \left( \frac{1}{\eps}  \cdot n\log^2 n + 
%             \frac{1}{\eps^2 } \cdot \frac{n\log n}{\eps^6}\right)  \log(1/\delta)
%			\right)	
%     ~=~ 	
O\left( n \left[\frac{1}{\eps} \log^2 n + \frac{1}{\eps^6} \log n + \frac{1}{\eps^8} \right] \log(1/\delta)\right).
\]

The algorithm is successful in getting a $(1-\eps)$-approximation whenever the 
iteration with $\area(K^*)$ in $I_i$ is successful. Thus, the whole algorithm
is a $(1-\eps)$-approximation with probability at least $1-\delta$, for the case 
$d_2(K^*)/d_1(K^*) \geq \eps$.
\end{proof}

Combining the algorithms for $d_2(K^*)/d_1(K^*) \leq \eps$ and $d_2(K^*)/d_1(K^*) \geq \eps$ we get a $(1-26\eps)$-approximation. Replacing $\eps$ with $\eps/26$ in the whole discussion, we obtain the following final result for maximizing the perimeter.

\begin{theorem}
\label{thm:main-result2}
    Let $P$ be a polygon with $n$ vertices,
    let $\eps$ and $\delta$ be parameters with $0<\eps< 1$ and $0<\delta <1$.
    In time $O\left( n\left[ (1/\eps^4) \log^2 n + \left((1/\eps) \log^2 n + (1/\eps^6) \log n + 1/\eps^8\right) \log(1/\delta) \right]  \right)$ 
    we can find a convex polygon contained in $P$ that,
    with probability at least $1-\delta$,
    has perimeter at least $(1-\eps)\cdot L^*(P)$.
\end{theorem}

%%%%%%%%%%%%%%%%%%%%%%%%%%%%%%%%%%%%%%%%%%%%%%%%%%%%%%%%%%%%%%%%%%%%%%%%%%%%%%%%%%%%%%%%%%%%%%%%%%%%%%%%%%%%%%%%%%%%%%%%%%%%%%%%%%%%%%%%%%%
\section{Conclusions}
\label{sec:conclusions}
There are several directions for future work. 
We explicitly mention the following:
\begin{itemize}
	\item Finding a deterministic $(1-\eps)$-approximation 
    	using near-linear time.
    \item Achieving subquadratic time for polygons with an unbounded number
    	of holes.
\end{itemize}

In the conference version of this paper (in the proceedings of SoCG 2014), we also mentioned the following two questions that have been answered affirmatively
by Balko et al.~\cite{BJVW15}.

\begin{itemize}
    \item Does Theorem~\ref{thm:prob1}(i) hold for arbitrary simple polygons?
    	We conjecture so, possibly with a larger constant. 
    \item Are similar results about the probability of random points being co-visible achievable in 3-dimensions? 
\end{itemize}

\section*{Acknowledgments}
We are grateful to Mark de Berg for asking for the perimeter and to Hans Raj Tiwary for pointing out the main obstacle when dealing with the perimeter.

%\bibliographystyle{abbrv}
%\bibliography{bibliography-potato}

\end{document}